\newenvironment{lenumerate}[2][]
{\begin{enumerate}[label=(#2\arabic*),leftmargin=0.2in,itemindent=0.15in,#1]}
{\end{enumerate}}
\setlist*[enumerate,1]{label={\itshape\arabic*)}}
\newcommand{\paragraphswithstop}{%
\let\copyparagraph\paragraph%
\renewcommand\paragraph[1]{\copyparagraph{##1.}}%
}
\newsavebox{\boxifnotempty}
\newcommand{\displayifnotempty}[3]{\sbox\boxifnotempty{#2}\setbox0=\hbox{\usebox{\boxifnotempty}\unskip}%
\ifdim\wd0=0pt
\else
 #1\usebox{\boxifnotempty}#3%
\fi%
}
\newcommand{\ifempty}[2]{\setbox0=\hbox{#1\unskip}%
\ifdim\wd0=0pt%
 #2%
\fi%
}
\newcommand{\ifnotempty}[2]{\setbox0=\hbox{#1\unskip}%
\ifdim\wd0>0pt%
 #2%
\fi%
}
\newcommand*\newstoreddef[1]{
  \BeforeClosingMainAux{%
    \immediate\write\@auxout{%
      \string\restoredef{#1}{\csname #1\endcsname}%
    }%
  }%
}
\newcommand*{\restoredef}[2]{
  \expandafter\gdef\csname stored@#1\endcsname{#2}%
}
\newcommand*{\storeddef}[1]{
  \@ifundefined{stored@#1}{0}{\csname stored@#1\endcsname}%
}
\newcommand{\real}[1]{\mathbb{R}^{#1}{}}
\newcommand{\bmat}[1]{\begin{bmatrix}#1\end{bmatrix}}
\newcommand{\transpose}{^\mathrm{T}}
\newcommand{\inverse}{^{-1}}
\DeclarePairedDelimiter{\abs}{\lvert}{\rvert}
\DeclarePairedDelimiter{\norm}{\lVert}{\rVert}
\newcommand{\vct}[1]{\mathbf{#1}}
\DeclareMathOperator{\diag}{diag}
\DeclareMathOperator*{\argmin}{\arg\!\min}
\DeclareMathOperator{\trace}{tr}
\DeclareMathOperator{\stack}{stack}
\DeclareMathOperator{\grad}{{grad}}
\newcommand{\union}{\cup}
\newcommand{\subjectto}{\textrm{subject to }}
\providecommand{\vq}{\vct{q}}
\providecommand{\vu}{\vct{u}}
\providecommand{\vv}{\vct{v}}
\providecommand{\vw}{\vct{w}}
\providecommand{\vz}{\vct{z}}
\providecommand{\cA}{\mathcal{A}}
\providecommand{\cC}{\mathcal{C}}
\providecommand{\cE}{\mathcal{E}}
\providecommand{\cF}{\mathcal{F}}
\providecommand{\cI}{\mathcal{I}}
\providecommand{\cK}{\mathcal{K}}
\providecommand{\cL}{\mathcal{L}}
\providecommand{\cM}{\mathcal{M}}
\providecommand{\cS}{\mathcal{S}}
\providecommand{\cT}{\mathcal{T}}
\newcommand{\newcolorlabel}[2]{%
  \expandafter\newcommand\csname #1\endcsname[1]{%
    \colorbox{#2}{\color{white}\textsf{\textbf{##1}}}}%
}
\newcommand{\newcommenter}[2]{%
  \expandafter\newcommand\csname #1\endcsname[1]{%
    \fcolorbox{#2}{#2}{\color{white}\textsf{\textbf{#1}}}
    {\color{#2}##1}}%
  \expandafter\newcommand\csname at#1\endcsname{%
    \fcolorbox{#2}{#2}{\color{white}\textsf{\textbf{@#1}}}
    {\color{#2}}}%
  \expandafter\newcommand\csname #1hl\endcsname[2]{%
    \colorbox{#2}{\color{white}\textsf{\textbf{#1}}}\sethlcolor{Azure2}\hl{##2}~%
    \expandafter\ifx\csname commentarrow\endcsname\relax$\leftarrow$\else \commentarrow[#2]\fi~%
    {\color{#2}##1}}%
  \expandafter\newcommand\csname #1st\endcsname[2]{%
    \colorbox{#2}{\color{white}\textsf{\textbf{#1}}}\sout{##2}~%
    \expandafter\ifx\csname commentarrow\endcsname\relax$\leftarrow$\else \commentarrow[#2]\fi~%
    {\color{#2}##1}}%
}
\tikzset{
  dim above/.style={to path={\pgfextra{
        \pgfinterruptpath
        \draw[>=latex,|->|] let
        \p1=($(\tikztostart)!1.5em!90:(\tikztotarget)$),
        \p2=($(\tikztotarget)!1.5em!-90:(\tikztostart)$)
        in(\p1) -- (\p2) node[pos=.5,sloped,above]{#1};
        \endpgfinterruptpath
      }
    }
  },
  dim double above/.style={to path={\pgfextra{
        \pgfinterruptpath
        \draw[>=latex,|->|] let
        \p1=($(\tikztostart)!3em!90:(\tikztotarget)$),
        \p2=($(\tikztotarget)!3em!-90:(\tikztostart)$)
        in(\p1) -- (\p2) node[pos=.5,sloped,above]{#1};
        \endpgfinterruptpath
      }
    }
  },
  dim below/.style={to path={\pgfextra{
        \pgfinterruptpath
        \draw[>=latex,|->|] let 
        \p1=($(\tikztostart)!-1em!-90:(\tikztotarget)$),
        \p2=($(\tikztotarget)!-1em!90:(\tikztostart)$)
        in (\p1) -- (\p2) node[pos=.5,sloped,below]{#1};
        \endpgfinterruptpath
      }
    }
  },
}
\tikzset{
    right angle quadrant/.code={
        \pgfmathsetmacro\quadranta{{1,1,-1,-1}[#1-1]}     
        \pgfmathsetmacro\quadrantb{{1,-1,-1,1}[#1-1]}},
    right angle quadrant=1, 
    right angle length/.code={\def\rightanglelength{#1}},   
    right angle length=2ex, 
    right angle symbol/.style n args={3}{
        insert path={
            let \p0 = ($(#1)!(#3)!(#2)$) in     
                let \p1 = ($(\p0)!\quadranta*\rightanglelength!(#3)$), 
                \p2 = ($(\p0)!\quadrantb*\rightanglelength!(#2)$) in 
                let \p3 = ($(\p1)+(\p2)-(\p0)$) in  
            (\p1) -- (\p3) -- (\p2)
        }
    }
}
\newcommand{\pgfextractangle}[3]{%
    \pgfmathanglebetweenpoints{\pgfpointanchor{#2}{center}}
                              {\pgfpointanchor{#3}{center}}
    \global\let#1\pgfmathresult  
}
\newcommand{\commentarrow}[1][Azure4]{\tikz[baseline=-3pt]{\node[shape border uses incircle, fill=#1,rotate=180,single arrow, inner sep=1pt, minimum size=6pt, single arrow head extend=2pt]{};}}
\tikzset{ax/.style={-latex,line width=2pt}}
\tikzset{camera/.style={fill=Sienna1,fill opacity=0.5},%
image plane/.style={draw=RoyalBlue3,line width=2pt}}
\DeclareMathAlphabet{\mathcal}{OMS}{cmsy}{m}{n}
\title{\LARGE \bf
 Distributed Online Task Assignment via Inexact ADMM for unplanned online tasks and its Applications to Security}
\author{Ziqi Yang and Roberto Tron \IEEEmembership{Member, IEEE} 
\thanks{This project is supported by the National Science Foundation grant "CPS: Medium: Collaborative Research: Multiagent Physical Cognition and Control Synthesis Against Cyber Attacks" (Award number 1932162).}
\thanks{Ziqi Yang is with the Department of Systems Engineering,
Boston University, Boston, MA 02215 USA (e-mail: zy259@bu.edu).
}
\thanks{Roberto Tron is with the Faculty of Mechanical Engineering and Systems Engineering, Boston University, Boston, MA 02215 USA (e-mail:tron@bu.edu).}}
\begin{document}

\maketitle
\thispagestyle{empty}
\pagestyle{empty}


\begin{abstract}
    In multi-robot system (MRS) applications, efficient task assignment is essential not only for coordinating agents and ensuring mission success but also for maintaining overall system security. In this work, we first propose an optimization-based distributed task assignment algorithm that dynamically assigns mandatory security-critical tasks and optional tasks among teams. Leveraging an inexact Alternating Direction Method of Multipliers (ADMM)-based approach, we decompose the task assignment problem into separable and non-separable subproblems. The non-separable subproblems are transformed into an inexact ADMM update by projected gradient descent, which can be performed through several communication steps within the team. 
    
    In the second part of this paper, we formulate a comprehensive framework that enables MRS under plan-deviation attacks to handle online tasks without compromising security. The process begins with a security analysis that determines whether an online task can be executed securely by a robot and, if so, the required time and location for the robot to rejoin the team. Next, the proposed task assignment algorithm is used to allocate security-related tasks and verified online tasks. Finally, task fulfillment is managed using a Control Lyapunov Function (CLF)-based controller, while security enforcement is ensured through a Control Barrier Function (CBF)-based security filter. Through simulations, we demonstrate that the proposed framework allows MRS to effectively respond to unplanned online tasks while maintaining security guarantees.
\end{abstract}
\begin{IEEEkeywords}
    Multi-robot system, distributed task assignment, optimization, cyber-physical security
  \end{IEEEkeywords}

\section{Introduction}\label{sec:introduction}
Effective team division is a common and valuable approach in multi-robot system (MRS) applications, particularly for coordinating tasks across large, dynamic environments. By forming sub-teams, MRS can enhance coordination, scalability, and task efficiency, enabling better coverage, redundancy, and adaptability. In security scenarios, such as patrolling, surveillance, and co-observation\cite{pajares2015overview,wardega2019resilience,julian2012distributed}, the sub-team structure offers distinct advantages over individual robot deployments by allowing tasks to be distributed more effectively within and across teams. Additionally, the flexible nature of MRS expends the integration of security measures into non-security tasks by augmenting additional application-specific objectives (e.g., map exploration) with co-observation plans (see, e.g., our previous work in \cite{wardega2019resilience} for details). This approach not only improves overall task performance but also enables the efficient monitoring of sensitive areas, detection of anomalies, and timely response to threats. 

These security applications often require teams to follow pre-defined trajectories, visit key checkpoints, and collaborate or co-observe with other teams to detect intrusions or suspicious activity. In practice, however, there might be situations that cannot be planned for ahead of time (e.g., visiting unexpected targets of interest). These online events will unavoidably lead to conflicts with the security spatio-temporal requirements (e.g., reaching an unexpected target might lead to missing a checkpoint). A complete optimal replan in such cases might be unfeasible, due to high-dimensional computations or security considerations. To take advantage of the redundancy provided by MRS and sub-teams, we propose to handle these \emph{online tasks} through an online task assignment that optimizes the use of the additional robots. The task assignment algorithm ensures that some robots will always fulfill the original plan and maintain the security requirements so that others can be assigned to handle online tasks. This requires the system to optimally assign tasks according to predefined priorities (pre-planned tasks versus online tasks)~\cite{khamis2015multi}. 

Traditionally, multi-agent task assignment problems can be addressed using either distributed or centralized approaches~\cite{chakraa2023optimization}. Centralized approaches require individual agent's information to be communicated to a central agent or server, which generates a plan for the entire system. By taking into account the capabilities and limitations of all agents and tasks, the central planner is able to optimize the overall system performance \cite{prasad2020,coltin2010mobile,liu2012centralized,jin2003cooperative}. However, the centralized method sacrifices robustness and scalability, exposing the system to risks of failures at the central entity and limiting the range to where centralized communications are possible.

Decentralized approaches, on the other hand, rely on local communications between agents, and offer resilience to a single point of failure, as well as scalability in team size and mission range. Common methods include local search algorithm~\cite{lee2014ad}, bio-inspired methods~\cite{jevtic2011distributed}, consensus-based algorithms~\cite{alighanbari2005decentralized,dionne2007multi}, auction algorithms~\cite{quinton2023market,cao2012overview}, game theory-based algorithms~\cite{Xu2024}, just to name a few. Consensus usually relies on local communication to converge to a common value\cite{alighanbari2005decentralized,dionne2007multi}. However, these methods cannot guarantee a conflict-free solution. On the other hand, auction algorithms with conflict-free capability are not robust to distributed network topologies, which may not cope with multiple tasks~\cite{sariel2005real,Bai2023}. Some approaches provide improvements for the latter case, e.g., by running sequential auctions \cite{sariel2005real,sujit2007distributed}, or incorporating consensus methods \cite{choi2009consensus}. In any case, however, it is not clear that existing consensus- or auction-based methods can be applied to our application, which requires real-time adaptability and robustness to dynamic changes in order to maintain security.

In this paper, we present a distributed task assignment algorithm to manage tasks with different priorities. The task assignment is formulated as an optimization problem, where task priorities are incorporated as constraints. We first demonstrate that when the number of tasks matches the number of robots, the problem can be solved in a distributed manner using a projected gradient descent variation of the inexact Alternating Direction Method of Multipliers (ADMM). As shown in \cite{ma2016alternating,chang2014multi}, the use of the inexact gradient-based updates in each ADMM step significantly reduces complexity while still ensuring convergence. With this approach, we divide the task assignment problem into separable and non-separable ADMM blocks. The separable block can be solved individually by each robot; by applying a projected gradient descent method in the ADMM update, the non-separable subproblems can also be solved in a distributed manner through communication between connected robots. In addition, we introduce the concepts of \emph{shadow agents} and \emph{secondary trajectory tasks} to address scenarios where the number of tasks is different from the number of available agents.

We test our task assignment algorithm in a co-observation-secured application, building on previous work in the area~\cite{yangmulti,wardega2019resilience}. \emph{Co-observations} serve as an additional security layer for non-security tasks, which utilize the physical sensing capabilities of robots to perform mutual observations within the MRS system. This technique aims to mitigate the risks posted by \emph{physical masquerade attacks}, where compromised robots are taken over by attackers, masquerading as legitimate robots, to gain unauthorized access to forbidden areas. As shown in~\cite{yang2021multi}, adversarial robots can be detected even without knowing the attacker's model by incorporating proper \emph{reachability} constraints and co-observation schedule into the trajectory plan~\cite{yangmulti,wardega2019resilience}. To extend this security framework to handle unplanned online tasks, we propose an online control framework that integrates co-observation and reachability constraints while ensuring safe task execution. The framework consists of three key components: 
\begin{itemize}
    \item \emph{Regroup time calculation module} that evaluates whether an unplanned task can be safely executed while maintaining security guarantees;
    \item \emph{Distributed task assignment algorithm} that dynamically allocates robots to both trajectory and online tasks without compromising co-observation security;
    \item \emph{Online control scheme} that trajectory tracking and online task fulfillment are formulated as Control Lyapunov Function (CLF) constraints. Security conditions are transformed into Signal Temporal Logic (STL) requirements and enforced using a Control Barrier Function (CBF)-STL-based security filter, ensuring real-time adherence to both spatial and timing constraints. 
\end{itemize}
 This framework allows robots to respond to unplanned events while preserving security guarantees, as detailed in~\cref{fig:Flowchart}.

\begin{figure}
    \centering
    \includegraphics[width=0.9\linewidth, trim = 5.2cm 10.5cm 3.3cm 4cm, clip]{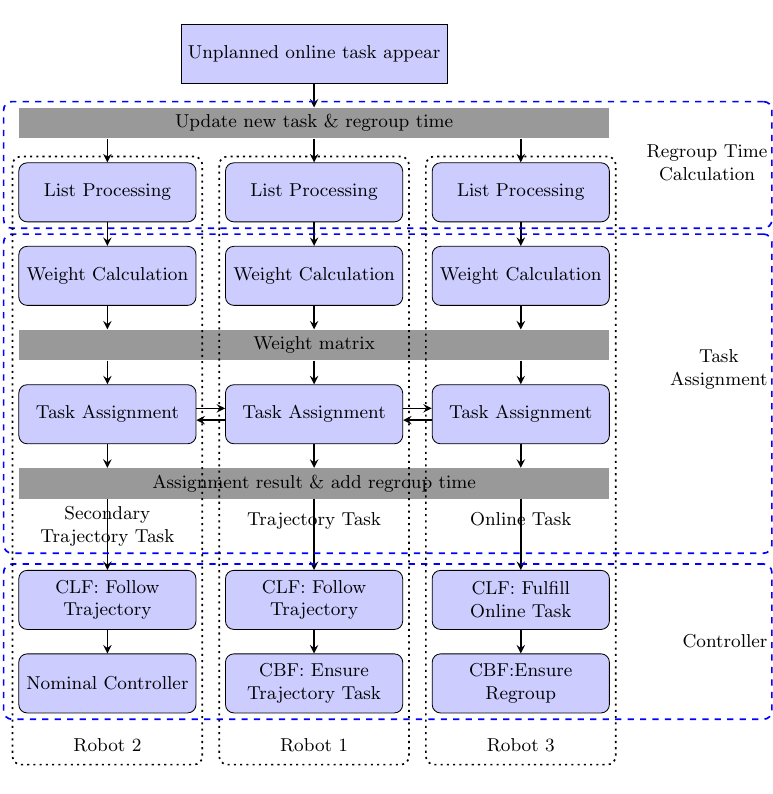}
    \caption{Strategies for co-observation-secured sub-teams encountering online tasks}\label{fig:Flowchart}
\end{figure}

The main contributions of this paper are as follows:
\begin{enumerate}
      \item We present a computationally inexpensive distributed online task assignment algorithm that can handle different priority constraints, such as ensuring the completion of pre-planned tasks while accommodating online tasks. This algorithm leverages an inexact ADMM to solve the task assignment problem in a distributed manner. Compared to existing optimization-based approaches~\cite{chakraa2023optimization}, our method is more scalable for the types of problems discussed in this paper.
     \item We demonstrate the applicability of the task assignment algorithm through a co-observation-secured multi-robot map exploration task. Our task assignment algorithm allows a team of robots with security-critical plans to efficiently handle online tasks without compromising security. The strategy contains three components: the decision-making process for secure deviations and regroups for online tasks, the distributed task assignment algorithm, and CLF and CBF based control framework~\cite{lindemann2018control,article09} to guide robots to meet their spatio-temporal schedule. 
\end{enumerate}

The organization of this paper is as follows. \Cref{sec:preliminary} introduces necessary definitions and preliminaries for task assignment problems. \Cref{sec:task_assignment} presents the distributed task assignment algorithm in detail. \Cref{sec:application} presents a co-observation-secured application. In this application, the capability to deal with online tasks is demonstrated using the proposed algorithm combined with additional decision-making processes and control methods. \Cref{sec:simulation} presents the simulation results. \Cref{sec:Conclusion} concludes this paper.

\section{Definitions and Preliminaries}\label{sec:preliminary}
\subsection{Pre-planned MRS trajectory and sub-teams}
We consider a multi-robot system with $N>1$ robots in an $m$ dimensional workspace, where a partition $\cI=\bigcup_p \cI_p$ is defined to separate the system into $N_p$ \emph{sub-team}s. 
A planner~\cite{yang2021multi} is used to generate reference paths that optimize high-level objectives, while also satisfying security requirements such as enforcing a \emph{co-observation schedule}, where agents are required to regroup at designated locations and scheduled times to maintain situational awareness and detect potential attacks. For example, in \cref{fig:Example_case_co-observation}, the dotted circles indicate locations where robots from the black and green teams are expected to arrive and perform co-observation, as specified in the pre-planned trajectories.

 The planner provides $N_P<N$ trajectories $\{\vq_p\}_{p=1}^{N_p}$ for each sub-team $\cI_p$ that consist of reference waypoints $\vq_p = [q_{p1},\dots,q_{pT}]$ where $q_{ij}\in\real{m}$ is the reference waypoint of team $i$ in an $m$ dimensional state space, $T$ is the time horizon. Robots in the same sub-team $\cI_p$ share the same nominal trajectory $\vq_p$ and treat the trajectory following as their \emph{primary task}. Each sub-team performs the task assignment individually. For the rest of this paper, we only consider states of members in one sub-team instead of the whole MRS system. For sub-team $\cA=\bigcup_i \cA_i$, let $\cA_i$ denotes agent $i$ in the sub-team. Subscript $\circ_{\cA_i}$ is used to indicate the property corresponding to $\cA_i$. For example, the state $x_{\cA_i}\in \real{m}$ represents the current position of \vspace{0.5cm}the agent $\cA_i$ in sub-team.

\subsection{Communication graph}
We assume that agents in the same sub-team $p$ can securely exchange information according to an undirected communication graph $G_p$. To each graph is associated a \emph{Laplacian matrix} $L_p$ defined as follows.
\begin{definition}\label{def:laplacian_matrix} Let $A$ be the adjacency matrix of a symmetric network. Let $D=\diag(A\vct{1})$. Thus, the Laplacian matrix is defined as $L=D-A$.~\cite{chung1997spectral}
\end{definition}
Assuming that the network graph is always connected and symmetric, we have the following:
\begin{lemma}\label{lemma:laplacian properties}
$L\vct{1}=0$ and $L\succ 0$.
\end{lemma}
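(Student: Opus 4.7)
The plan is to verify the two claims directly from \cref{def:laplacian_matrix}, since both are standard algebraic-graph-theoretic facts that require only the definition $L = D - A$ together with $D = \diag(A\vct{1})$ and the symmetry/connectedness assumptions.

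First I would establish $L\vct{1} = 0$. By construction, the $i$th diagonal entry of $D$ equals the $i$th row sum of $A$, so $D\vct{1} = A\vct{1}$ as vectors. Therefore $L\vct{1} = (D-A)\vct{1} = D\vct{1} - A\vct{1} = 0$, which is a one-line computation with no obstacle.

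Next I would prove the semidefiniteness claim by writing the quadratic form $x^{\!\top} L x$ as a non-negative sum of squares over the edges. Using $L = D - A$ and the symmetry of $A$, a short manipulation gives
\begin{equation*}
x^{\!\top} L x \;=\; \tfrac{1}{2}\sum_{i,j} A_{ij}(x_i - x_j)^2,
\end{equation*}
which is manifestly $\geq 0$ for every $x$, so $L \succeq 0$. (I interpret the statement $L \succ 0$ in the lemma as shorthand for positive semidefiniteness, since the first assertion $L\vct{1}=0$ explicitly exhibits $\vct{1}$ as a null vector and thus precludes strict definiteness on all of $\real{n}$.) If one wishes to sharpen this to $x^{\!\top} L x > 0$ for every $x \notin \vecspan\{\vct{1}\}$, the only additional ingredient is the connectedness assumption stated just before the lemma: any $x$ giving $x^{\!\top} L x = 0$ must be constant on each connected component of $G_p$, and connectedness forces $x \in \vecspan\{\vct{1}\}$.

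I do not anticipate a genuine obstacle; both assertions are immediate from the definition, and the only subtlety is a possible typo ($\succ$ versus $\succeq$) which the sum-of-squares identity clarifies unambiguously.
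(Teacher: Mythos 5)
Your proof is correct. Note that the paper itself states this lemma without proof, treating it as a standard consequence of Definition~\ref{def:laplacian_matrix} and the symmetry/connectedness assumption, so there is no argument in the paper to compare against; your two steps (the row-sum computation $D\vct{1}=A\vct{1}$ giving $L\vct{1}=0$, and the sum-of-squares identity $x\transpose Lx=\tfrac{1}{2}\sum_{i,j}A_{ij}(x_i-x_j)^2$ giving semidefiniteness) are the standard way to supply the missing proof. Your observation that the two claims are jointly inconsistent under a strict reading --- $L\vct{1}=0$ exhibits a nonzero null vector, so $L$ cannot satisfy $L\succ 0$ in the sense of positive definiteness --- is a genuine and worthwhile correction: the lemma should read $L\succeq 0$, and this weaker form is all that the paper actually uses (e.g.\ in Lemma~\ref{lemma:keep_constraint_valid} and the convergence argument of the appendix, only the invariance $\vct{1}\transpose L=0$ and nonnegativity of the quadratic form matter).
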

Let $\alpha=\stack(\{\alpha_i\})$ be a vector of scalar states at each node. The following observation is used further below to identify computations that can be implemented in a distributed manner.
\begin{lemma}
    The computation of $L\alpha$ can be done in a distributed way: each agent $i$ only needs to know $\alpha_i$ and exchange a single round of communication with its neighbors $N_i$.
\end{lemma}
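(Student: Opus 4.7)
The plan is to unpack the definition $L = D - A$ given in \cref{def:laplacian_matrix} and examine the $i$-th row of $L\alpha$ componentwise, since the distributedness claim is ultimately a statement about which entries of $\alpha$ each component depends on. The only subtle point is being precise about what ``a single round of communication'' means, and verifying that each agent has locally everything it needs after that round.

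First I would write
\begin{equation}
    (L\alpha)_i = \sum_{j} L_{ij}\alpha_j = D_{ii}\alpha_i - \sum_{j} A_{ij}\alpha_j,
\end{equation}
and then note that by the definition of the adjacency matrix $A$, the entry $A_{ij}$ is nonzero only when $j \in N_i$, while $D_{ii} = \sum_j A_{ij} = |N_i|$ is the degree of node $i$. Substituting gives
\begin{equation}
    (L\alpha)_i = \sum_{j \in N_i}(\alpha_i - \alpha_j),
\end{equation}
which visibly depends only on $\alpha_i$ and the values $\{\alpha_j\}_{j \in N_i}$ held by the neighbors of $i$.

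Next I would describe the communication protocol explicitly: in a single round, each agent $i$ transmits its scalar $\alpha_i$ over each incident edge and simultaneously receives $\alpha_j$ from every neighbor $j \in N_i$. After this round, agent $i$ possesses $\alpha_i$ together with $\{\alpha_j\}_{j \in N_i}$, which by the displayed formula is exactly the information required to evaluate $(L\alpha)_i$ locally, with no further communication.

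I do not expect any real obstacle here; the proof is essentially a bookkeeping exercise. The only thing to be careful about is making the communication model precise enough that ``single round'' is unambiguous (each agent sends once to all neighbors and receives once from all neighbors in parallel), so that the reader sees the claim is about locality of the support of each row of $L$ rather than about any iterative averaging procedure.
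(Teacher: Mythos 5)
Your proof is correct and follows essentially the same route as the paper's: expand $L = D - A$ row by row to obtain $(L\alpha)_i = \sum_{j\in N_i}(\alpha_i - \alpha_j)$, which depends only on data available after one exchange with neighbors. (Your sign convention is in fact the consistent one for $L = D - A$; the paper's one-line proof writes $\sum(\alpha_j-\alpha_i)$, which is off by a sign, and your explicit statement of the communication protocol is a harmless elaboration.)
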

\begin{proof}
    Given the definition of $L$, for an individual agent, the operation can be written as $[L\alpha]_i=\sum_{\in N_i} (\alpha_j-\alpha_i)$, hence the claim.
\end{proof}
\subsection{Key Notations}
 For a matrix $\bullet\in\real{m\times n}$, $\bullet_{:j}\in\real{m}$ now used to represent task-wise (column) variables, while $\bullet_i = \bullet_{i:}\transpose\in\real{n}$ represent agent-wise (row) variables.


\section{Distributed task assignment protocol}\label{sec:task_assignment}
In this section, we formulate the task assignment problem and solve it in a distributed manner using an inexact ADMM-based approach. The problem involves two types of assignable tasks.

\begin{description}
\item[\emph{Trajectory task $P$}]: This task corresponds to the original pre-planned mission trajectory, with designated locations and pre-determined times for the agents to reach. The trajectory task is fulfilled by following the nominal trajectory $q_p$ given to the sub-team. These tasks have the highest priority and are required to have at least one agent in each sub-team always assigned to them. 
\item[\emph{Online tasks $\{O_j\}$}]: These are tasks that are not known at planning time. An online task $j$ appears at a certain time $t_{O_j}$ at a location $q_{O_j}$ (\cref{fig:Example_case_task_appear}), and it is satisfied when a robot reaches a small neighborhood of $q_{O_j}$. We consider that an online task is optional, and may be unsatisfied if there are no available agents, or if the overall security requirements cannot be maintained.
\end{description}

Inside the sub-team, each robot is either assigned to the trajectory task $P$, or to an online task $O_j$. We introduce a matrix of \emph{splitting coefficients} $ \pmb\alpha_{\cA} =[\alpha_{i,j}]$, which are functions of time, and where each element $\alpha_{i,j}=1$ represents the assignment of agent $i$ to task $j$. We formally define the main task assignment problem via the following Linear Program:
\begin{subequations}\label{eq:original_problem}
\begin{align}
    \min_{\alpha} &-\trace(\pmb W_{\cA}\transpose \pmb\alpha_{\cA}),\\
    \subjectto & \sum_{j} \alpha_{ij}=1, \label{eq:agent total}\\
    &\sum_i \alpha_{iP}= 1, \label{eq:primary task}\\
    &\sum_i \alpha_{ij}\leq 1, \label{eq:optional tasks}  \quad \textrm{for}\quad j\neq P\\
    & 0\leq \alpha_{ij} \leq1, 
\end{align}
\end{subequations}
where $\pmb W_{\cA}, \pmb\alpha_{\cA} \in\real{\abs{\cA}\times \abs{\cT}}$ is a matrix of weights, $j\in\cT$ $i\in \cA$, $\cT \subseteq P \union \{O_j\}$ is the set of trajectory and online tasks, and $\abs{\cA}$ and $\abs{\cT}$ are the total number of agents and tasks respectively. 

The rationale behind the constraints is as follows:
\begin{description}
\item[\eqref{eq:agent total}] ensures that each agent is assigned to exactly one task. This is a local constraint that is easy to handle through a local step of ADMM.
\item[\eqref{eq:primary task}] ensures that the trajectory task is always assigned exactly once.
\item[\eqref{eq:optional tasks}] ensures that online tasks do not get assigned to more than one agent. When $\abs{\cA} = \abs{\cT}$, this constraint can be written as $\sum_i a_{ij} = 1$
\end{description}
Each entry $\pmb W_{\cA} = [w_{ij}]$ represents the weight for task $j$ and $\cA_i$; $w_{ij}$ is determined from the current state (location) $x_{\cA_i}$ of agent $i$ as follows:
\begin{description}
    \item[Trajectory task $P$]: 
    \begin{equation}\label{eq:trajectory_task_weight}
    w_{\cA P}(x_{\cA_i})=(d( q_t,x_{\cA_i})+\epsilon)\inverse, 
    \end{equation} where $\epsilon$ is a perturbation term to avoid singularity.  The weight of the trajectory task is formulated as the Euclidean distance $d(q_t,x_{\cA_i})$ between the current location $x_{\cA_i}$ and reference waypoint $ q_t$ at next time step.
    \item[Online tasks $\{O_1,O_2,\ldots\}$]:
    \begin{equation}
        w_{\cA O_j}(x_{\cA_i})=(d(q_{O_j},x_{\cA_i})+\epsilon)\inverse.
    \end{equation} 
    Weight of the online task $O_j$ is formulated as the distance between the current location $x_{\cA_i}$ and task location $q_{O_j}$.
\end{description}
Notice that, in all cases, the weights satisfy the property $w_{ij}\leq \epsilon\inverse$.

Our goal is to compute the coefficients $\vct{\alpha}$ in a distributed manner.
 We first show a distributed solution for problem \eqref{eq:original_problem} when we have an equal number of agents and tasks ($\abs{\cA} = \abs{\cT}$); we extend the solution for other cases ($\abs{\cA} \neq \abs{\cT}$) in later sections.

\subsection{Distributed optimization problem formulation for the square case ($\abs{\cA} = \abs{\cT}$)}\label{sec:square_case}

When the number of agents and tasks (including both trajectory and online) is the same, i.e.,  $\abs{\cA} = \abs{\cT}$, the problem can be written in matrix form (dropping the subscript $\cA$ to simplify the notation) as:
\begin{subequations}\label{eq:nXn_problem}
\begin{align}
    \min_{0\leq \pmb\alpha \leq 1} &-\sum_i (\vw_i\transpose  \pmb\alpha_i)\label{eq:square-problem-objective}\\
    \subjectto & \pmb\alpha \vct{1}=\vct{1} \label{eq:one_task_assigned}\\
    & \vct{1}\transpose \pmb\alpha_{:j}= 1 \quad
\forall j\in \cT \label{eq:one_agent_assigned}
\end{align}
\end{subequations}
where $\pmb\alpha_{:j}$ is the $j$-th columns of $\pmb\alpha$, and $\pmb\alpha_{i} = \pmb\alpha_{i:}\transpose$ denotes the $i$-th row of $\pmb\alpha$,  representing the local coefficient for $\cA_i$.

Although \eqref{eq:nXn_problem} cannot be solved entirely locally, our strategy is to first decompose it using ADMM framework into a separable sub-problem and a non-separable sub-problem, optimized over $\pmb\alpha$ and $\vz\in\real{\abs{\cA}\times \abs{\cT}}$ respectively, and connected two sub-problem via the consensus constraint $\pmb\alpha = \vz$. We then show that the non-separable sub-problem can also be solved in a distributed manner through a modified update rule introduced later in \cref{sec:distributed_computation_of_z}.
First, we define the following: 
\begin{align}
&f_i( \pmb\alpha_i) = - \vw_{i}\transpose  \pmb\alpha_i \label{eq:individual_problem}\\
&\textrm{dom}f_i = \{  \pmb\alpha_i|  \pmb\alpha_i\transpose\vct{1}=1,  0\leq \pmb\alpha \leq 1\} \label{eq:individual_constraint}
\end{align}
\begin{equation}\label{eq:consensus_problem}
	g(\vz)=\begin{cases}
	0 & \textrm{ if } \vct{1}\transpose \vz  = \vct{1}\\
	-\inf & \textrm{ otherwise.}
	\end{cases}
\end{equation} 
where $f_i$ is the locally separated objective function \eqref{eq:square-problem-objective} for $\cA_i$; $\textrm{dom} f_i$ represents the separable constraints \eqref{eq:one_task_assigned} encoded as the domain of $f_i$; and $g(\vz)$ is the indicator function over the feasible set defined by the non-separable constraint \eqref{eq:one_agent_assigned}. 
\begin{remark}
    Term $\textrm{dom} f_i$ and $g(\vz)$ allow the problem to be written compactly in the standard ADMM form. Whereas in the ADMM update steps, $\textrm{dom} f_i$ and $g(\vz)$ are equivalently enforced as conventional constraints during $\pmb\alpha$-updates and $\vz$-updates. 
\end{remark}
Following the ADMM framework \cite{Boyd2011}, the overall problem can be written as:
\begin{subequations}\label{eq:square-problem}
\begin{align}
\min_{\pmb\alpha} \qquad & \sum_i f_i( \pmb\alpha_i) + g(\vz)\\
\subjectto & \pmb\alpha = \vz \label{eq:consensus constraint}
\end{align}
\end{subequations}

The Lagrangian of the problem can be written as: 
\begin{equation}\label{eq:Lagrangian}
	\cL(\pmb\alpha,\vz,\vu) = \sum_i f_i( \pmb\alpha_i) + g(\vz) + (\rho/2) \norm{\pmb\alpha-\vz+\vu}_2^2
\end{equation}
 where $\rho >0$ is the penalty parameter which controls the dual update step and represents the penalty on primal feasibility violation. We keep $\rho$ constant in this work for simplicity, as tuning $\rho$ is not the focus of this paper. $\vu\in\real{\abs{\cA}\times \abs{\cT}}$ is the scaled dual variable in the ADMM framework, interpreted as the accumulated residual of the primal feasibility constraint $\pmb\alpha=\vz$.



The ADMM framework then proceeds with iterations through the following steps:
\begin{enumerate}
	\item $\pmb\alpha$-update:
	\begin{equation}
	\begin{aligned}
	\pmb\alpha \gets \argmin_{\pmb\alpha} &-\sum_{i=1}^n \vw_{i}  \pmb\alpha_i\transpose + (\rho/2) \norm{\pmb\alpha-\vz^k+\vu^k}_2^2\\
	\subjectto &\pmb\alpha\transpose\vec{1}=1\\
	&\vec{0}\leq\pmb\alpha\leq\vec{1}
	\end{aligned}
	\end{equation}
	The optimization problem is separable, and the local objective for agent $i$ can be written as a QP problem:
	\begin{equation}
	\begin{aligned}
	\min_{ \pmb\alpha_i} & \quad \frac{1}{2} \pmb\alpha_i\transpose  \pmb\alpha_i + (-\vz_i^k+\vu_i^k-\frac{\vw_{i}}{\rho})\transpose \pmb\alpha_i\\
	\subjectto &\quad \vct{1}\transpose \pmb\alpha_i=1\\
	&\quad \vec{0}\leq \pmb\alpha_i\leq\vec{1}.
	\end{aligned}
	\end{equation}
	Each agent can compute its part of the solution independently.
	
    \item $\vz$-update:
	\begin{equation}\label{eq:z_update}
	 \vz \gets \argmin_{\vct{1}\transpose \vz  = \vct{1}} \sum_{j= 1}^m\frac{1}{2}\norm{\vz-(\pmb{\alpha}^{k+1}+\vu^k)}^2
	\end{equation}
	
	This problem is separable over tasks $\vz_{:j}$, i.e. the columns of $z$. Each sub-problem (i.e., the solution for task $j$) can be formulated as:
	\begin{subequations}\label{eq:z_update_reform}
	\begin{align}
	\min & \frac{1}{2}\norm{ \vz_{:j}-\pmb\alpha_{:j}^{k+1}- \vu_{:j}^k}^2\label{eq:z_update_reform_obj}\\
	\subjectto & \vct{1}\transpose \vz_{:j}=1\label{eq:z_update_reform_constraint}
	\end{align}
	\end{subequations}
	where $\vz_{:j}$, $\pmb\alpha_{:j}$ and $\vu_{:j}$ are all $j$th column of the corresponding matrices.
	The objective can be written as $\sum_i (\vz_{ij}-\pmb\alpha_{ij}-\vu_{ij})^2$, which is separable over agents, but the constraint is not, so we cannot directly solve this step in a distributed way. However, we show in Section \ref{sec:distributed_computation_of_z} that this can be solved using \emph{projected gradient descent}.
\item $\vu$-update:
\begin{equation}
\vu_i \gets \vu_i^k +  \pmb\alpha_i^{k+1} - \vz_i^{k+1}
\end{equation}
Each multiplier $\vu$ update only needs information that is local to the node, so this step can be executed independently at each node. 
\end{enumerate}

\subsection{Distributed $\vz$-update}\label{sec:distributed_computation_of_z}
This subsection introduces the implement how the proximal update helps to solve~\eqref{eq:z_update} in a distributed fashion. When the communication graph $G_p$ is connected,~\eqref{eq:z_update_reform} can be approximately solved using a distributed projected gradient descent, where each robot requires only information from its neighbors.


\begin{lemma}\label{lemma:keep_constraint_valid}
The iterations $y[k+1]=y[k]+Lv[k]$ satisfy $\vct{1}y[k]=\vct{1}\transpose y[0]$ for all $k$. 
\end{lemma}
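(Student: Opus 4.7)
The plan is a straightforward induction on $k$, where the crux is that the increment $Lv[k]$ is annihilated by $\vct{1}\transpose$ on the left. The key fact I would invoke is \cref{lemma:laplacian properties}, which tells us $L\vct{1}=0$; combined with the assumed symmetry of the Laplacian (since the communication graph is undirected), this immediately gives $\vct{1}\transpose L = (L\vct{1})\transpose = 0$.

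Concretely, I would proceed as follows. First, left-multiply the recursion $y[k+1]=y[k]+Lv[k]$ by $\vct{1}\transpose$ to obtain
\begin{equation}
\vct{1}\transpose y[k+1] = \vct{1}\transpose y[k] + \vct{1}\transpose L\, v[k].
\end{equation}
Second, observe that by \cref{lemma:laplacian properties} and symmetry, $\vct{1}\transpose L v[k] = 0$ regardless of the value of $v[k]$, so the sum $\vct{1}\transpose y[k]$ is invariant from one iteration to the next. Third, a trivial induction starting from $k=0$ yields $\vct{1}\transpose y[k] = \vct{1}\transpose y[0]$ for all $k$, as claimed.

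There is essentially no technical obstacle here; the only subtlety worth flagging is the symmetry assumption on $L$, which is needed to pass from $L\vct{1}=0$ to $\vct{1}\transpose L = 0$. Since the paper has already stated that the communication graph is undirected and symmetric, this is covered. The lemma is really a bookkeeping statement guaranteeing that the distributed projected-gradient scheme preserves the affine constraint \eqref{eq:z_update_reform_constraint} exactly, which is precisely what is needed to apply it inside the $\vz$-update of the ADMM loop.
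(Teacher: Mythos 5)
Your proof is correct and follows the same route as the paper: left-multiply the recursion by $\vct{1}\transpose$, use $L\vct{1}=0$ (with symmetry of $L$) from \cref{lemma:laplacian properties} to kill the increment, and conclude by induction. Your explicit note that symmetry is needed to pass from $L\vct{1}=0$ to $\vct{1}\transpose L=0$ is a small but worthwhile clarification that the paper leaves implicit.
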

\begin{proof} Using the properties of the Laplacian matrix in Lemma \ref{lemma:laplacian properties}, we have the following invariant:
$\vct{1}\transpose y[k+1]=\vct{1}\transpose y[k]+\vct{1}\transpose Lv[k]=\vct{1}\transpose y[k]$, from which the claim follows.
\end{proof}

The \emph{gradient} of problem (\ref{eq:z_update_reform}) can be computed in a distributed manner, we simply have
$\grad_{\vz_{:j}}\cL=(\vz_{:j}^k-\pmb\alpha_{:j}^{k+1}-\vu_{:j}^k)$.

Thus, by iterating 
\begin{multline}\label{eq:z_update_distributed}
\vz_{:j}[k'+1]=\vz_{:j}[k']-\varepsilon L\grad_{\vz_{:j}} \cL\\
=\vz_{:j}[k']-\varepsilon L (\vz_{:j}[k']-\pmb\alpha_{:j}^{k+1}-\vu_{:j}^k)\\
=(I-\varepsilon L)\vz_{:j}[k']+\varepsilon L(\pmb\alpha_{:j}^{k+1}+\vu_{:j}^k)
\end{multline}
we show that
\begin{itemize}
	\item $\vz[k']$ converges to the desired solution of \eqref{eq:z_update};
	\item Each step is distributed.
\end{itemize}

Notice that the update for each node can be written as the sum of contributions over its neighbors. Specifically, for any state vector $\vv$, from~\cref{def:laplacian_matrix}, we have $L\vv=(D-A)\vv=D\vv-A\vv$. Updates in the form of $L\vv$ can be written individually for each agent $\cA_i$ as $i$-th entry of $L\vv$, which yields:
\begin{equation}
[L\vv]_i = d_iv_i-\sum_{i'} a_{ii'} v_{i'}=\sum_{(i,i') \in E_i} (v_i - v_{i'}),
\end{equation}
where $d_i=\sum_{i':(i,i') \in E} a_{ii'}$, and $E_i$ denotes the set of edges connected to $\cA_i$.

 In contrast to the task-wise update in \eqref{eq:z_update_distributed}, we provide the agent-wise local update for each agent $\cA_i$. Let $\vz_i = \vz_{i:}\transpose$ denote the $i$th row of $\vz$, the update law is given by:
\begin{multline}
\label{eq:z_update_distributed_agent}
 \vz_{i}[k'+1]= \vz_{i}[k']-\varepsilon\sum_{(i,i') \in E}\Bigl((\vz_{i}[k'] - \vz_{i'}[k])\\
 -(\pmb\alpha^{k+1}_{i}-\pmb\alpha^{k+1}_{i'})-(\vu^k_{i} - \vu^k_{i'})\Bigr) 
\end{multline}
Note the different $\pmb\alpha^{k+1}$ and $\vu^k$ are constant while we update the $\vz_i$. 

In \cref{app:proof}, we show that a single iteration of~\eqref{eq:z_update_distributed} is enough to guarantee $z$ converge to its optimal value. While in practice, \eqref{eq:z_update_distributed} can be repeated multiple times to accelerate convergence. The number of iterations for~\eqref{eq:z_update_distributed} is a trade-off between computation time and communication overhead. For instance, simulations in~\cref{sec:simulation} show that five iterations of the $z$-update represent an effective balance in the particular setting considered.

\begin{remark}\label{rmk:not-binary}
    The assignment problem requires multiple iterations to converge; during this process, $\alpha$ isn't necessarily $0$s or $1$s. Rounding $\alpha$ prematurely may cause chattering—frequent switching between assignment states. In practice, we interpret $\alpha$ as priority coefficients in the control layer, as detailed in \cref{sec:control-design}, \cref{rmk:alpha-in-control}.
\end{remark}


\subsection{Shadow agents and secondary trajectory tasks}
The distributed solution of problem~\eqref{eq:nXn_problem} is based on \cref{lemma:keep_constraint_valid} which requires that $\pmb\alpha$ is a square matrix and $\vct{1}\transpose \pmb \alpha = \vct{1}$. 
This cannot be the case when the number of robots and tasks are not equal (i.e. $\abs{\cA} \neq \abs{\cT}$). 
For these cases, we introduce the concepts of \emph{secondary trajectory task} (when $\abs{\cA}>\abs{\cT}$) and \emph{shadow agent} (when $\abs{\cA}<\abs{\cT}$), which allow us to reformulate the general assignment problem \eqref{eq:original_problem} to square case introduced in \cref{sec:square_case}. 

\subsubsection{Secondary trajectory tasks}

For cases where the number of agents is larger than the number of tasks (i.e. $\abs{\cA} > \abs{\cT}$), we introduce \emph{secondary trajectory tasks} $P'=\bigcup_j P'_j$, where $\abs{P'} = \abs{\cA} - \abs{\cT}$, to capture the marginal gains of having more than one agent following the planned trajectory. The modified weight matrix is denoted as $\tilde{\pmb W}$.

To ensure correctness, the weight for $P'$ should be lower than all the weights of the online tasks, as well as the trajectory task $P$; specifically, we propose $w_{iP'_j}(x_{\cA_i})=(d(P,x_{\cA_i})+\epsilon')\inverse$, where $\epsilon'\gg\epsilon$, and $\epsilon'$ is significantly larger than the maximum distance between any agent and objective targets (i.e. $ \epsilon' \gg \max_{\forall O,\forall x}d(O,x)$).
\begin{proposition}
    The introduction of $P'$ will not alter the assignment of $P$ and $O$ in the original problem.
\end{proposition}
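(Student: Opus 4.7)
The plan is to show via an exchange argument that every optimum $\tilde{\pmb\alpha}^*$ of the augmented (square) problem, when restricted to the columns indexed by $P$ and by the online tasks $\{O_j\}$, coincides with an optimum of the original assignment problem \eqref{eq:original_problem}. Intuitively, the secondary tasks only absorb the $\abs{\cA}-\abs{\cT}$ excess agents without influencing the genuine assignments, because their weights are constructed to be uniformly strictly smaller than those of any genuine task.

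First, I would establish a weight-separation lemma: for every agent $\cA_i$, every secondary task $P'_k$, and every genuine task $Q\in\{P\}\cup\{O_j\}$, one has $w_{iP'_k} < w_{iQ}$. For $Q=P$ this is immediate from $\epsilon' \gg \epsilon$, since both weights share the same distance term $d(q_t,x_{\cA_i}) = d(P,x_{\cA_i})$. For $Q=O_j$ it follows from
\begin{equation}
 d(P,x_{\cA_i}) + \epsilon' \;>\; \max_{O,x} d(O,x) + \epsilon \;\geq\; d(q_{O_j},x_{\cA_i}) + \epsilon,
\end{equation}
so $w_{iO_j} = (d(q_{O_j},x_{\cA_i})+\epsilon)\inverse > (d(P,x_{\cA_i})+\epsilon')\inverse = w_{iP'_k}$.

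Second, I would exploit the fact that the augmented LP is a transportation problem on the Birkhoff polytope (total unimodularity), so without loss of generality I restrict attention to integer $\{0,1\}$ optima $\tilde{\pmb\alpha}^*$. Let $\pmb\alpha^\circ$ be any optimum of \eqref{eq:original_problem}, and extend it to an augmented-feasible $\bar{\pmb\alpha}$ by placing the $\abs{\cA}-\abs{\cT}$ leftover agents on distinct secondary tasks $P'_k$ (any assignment works by symmetry of the $P'_k$ weights across secondary columns). The symmetric difference between $\tilde{\pmb\alpha}^*$ and $\bar{\pmb\alpha}$ decomposes into alternating cycles in the agent-task bipartite graph, and along any cycle that touches a secondary column the swap $\tilde{\pmb\alpha}^* \to \bar{\pmb\alpha}$ changes the objective by a sum of strictly positive terms of the form $w_{iQ} - w_{iP'_k}$ by the weight-separation lemma. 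Since $\tilde{\pmb\alpha}^*$ is assumed optimal, no such strictly improving swap can exist, forcing all secondary-touching cycles to be absent; consequently, $\tilde{\pmb\alpha}^*$ and $\bar{\pmb\alpha}$ agree on every genuine column, which is exactly the conclusion of the proposition.

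The main obstacle will be making the symmetric-difference decomposition rigorous when several secondary tasks are simultaneously present and the original optimum $\pmb\alpha^\circ$ is not unique. The strict inequality supplied by the weight-separation lemma is precisely what prevents any "tie" alternating cycle from yielding an equal-objective alternative that mixes genuine and secondary columns, so the exchange argument localises cleanly to the genuine-column restriction and the claim follows without needing to track the precise identities of the secondary assignments.
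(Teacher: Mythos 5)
Your overall strategy---an exchange argument showing that secondary columns cannot participate in any improving swap because their weights are uniformly dominated---is the same idea as the paper's proof. Your organization of it is genuinely more systematic: the paper only works out a single two-agent swap explicitly and dismisses ``changes to more than one pair of assignments'' with a ``similarly,'' whereas you reduce to integral optima via the Birkhoff-polytope structure and decompose the disagreement into alternating cycles, which is exactly the machinery needed to handle the general case. Your weight-separation lemma is also correct as stated.

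However, the decisive step of your cycle argument is wrong. Take the simplest secondary-touching cycle: $\cA_1$ on $P'_1$ and $\cA_2$ on $O_1$ in $\tilde{\pmb\alpha}^*$, versus $\cA_1$ on $O_1$ and $\cA_2$ on $P'_1$ in $\bar{\pmb\alpha}$. The objective change of the swap is $(w_{1O_1}-w_{1P'_1})-(w_{2O_1}-w_{2P'_1})$, a \emph{difference} of two terms of the form $w_{iQ}-w_{iP'_k}$, not a sum; weight separation makes each parenthesized term positive but says nothing about the sign of their difference. More generally, a cycle touching a secondary column also contains genuine-to-genuine edges whose contributions $w_{iQ}-w_{iQ'}$ have no definite sign, so the claim that the swap ``changes the objective by a sum of strictly positive terms'' fails, and with it the conclusion that secondary-touching cycles are absent. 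What the argument actually requires is two separate ingredients: (i) the restriction of $\tilde{\pmb\alpha}^*$ to the genuine columns is feasible for the original problem, so by optimality of $\pmb\alpha^\circ$ its genuine-column value cannot exceed that of $\bar{\pmb\alpha}$; and (ii) the secondary-column contribution is nearly independent of \emph{which} agents are parked there, because $\epsilon'\gg\max d$ forces $w_{iP'_k}\approx(\epsilon')\inverse$ uniformly in $i$, so the variation $\abs{w_{iP'_k}-w_{i'P'_k}}$ is negligible against the gap between distinct genuine objective values. Your lemma $w_{iP'_k}<w_{iQ}$ is the wrong inequality for ingredient (ii). To be fair, the paper's own proof leans on the same unquantified ``sufficiently small'' claim for (ii), but it does explicitly invoke the optimality of the original solution for the genuine-genuine comparison, which your write-up replaces with the unsupported per-cycle sign claim.
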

\begin{proof}
    Let $\pmb W, \pmb \alpha_o$ be the weights and optimal solution for the rectangular problem~\eqref{eq:original_problem}, and $\tilde{\pmb W},\tilde{\pmb\alpha}$ be the set of weights and optimal solution for the square problem~\eqref{eq:nXn_problem}. Note that $W$ is a subset of the entries of $\tilde{\pmb W}$. Let $\tilde{\pmb \alpha}_o$ be the result $\pmb \alpha_o$ extended in a way such that the unassigned agents are assigned to $P'$; note that $\pmb \alpha_o$ is a subset of $\tilde{\pmb \alpha}_o$.

    To prove the claim, we want to prove that $\tilde{\pmb \alpha}=\tilde{\pmb \alpha}_o$.
    Then, by way of contradiction, assume that there exist a solution of $\tilde{\pmb \alpha}\neq \tilde{\pmb \alpha}_o$ that gives a better cost, i.e., 
    \begin{equation}
        \trace(\tilde{\pmb W}\transpose\tilde{\pmb \alpha})>\trace(\tilde{\pmb W}\transpose\tilde{\pmb \alpha}_o)\label{eq:assumption for contradiction}
    \end{equation}
    and the two solutions differ from each other only by two agents: $\cA_1$ is unassigned in $\alpha_o$, but is assigned to $P$ in $\tilde{\alpha}$; conversely, $\cA_2$ is assigned to $P$ in $\alpha_o$, but is assigned to $P'_1$ in $\tilde{\alpha}$. This yields:
    \begin{multline}\label{eq:counterproof2}
        \trace(\tilde{\pmb W}\transpose \tilde{\pmb\alpha})\geq tr(\tilde{\pmb W}\transpose \tilde{\pmb\alpha}_o) \Longrightarrow \\  (w_{P}(x_{\cA_1})+w_{P'_1}(x_{\cA_2}))\geq(w_{P'_1}(x_{\cA_1})+w_{P}(x_{\cA_2})).
    \end{multline}

   However, from the fact that $\alpha$ is optimal for problem~\eqref{eq:original_problem}, $w_{P}(x_{\cA_1}) < w_{P}(x_{\cA_2})$; and from the assumption of $\epsilon\gg\epsilon$, $w_{P'_1}(x_{\cA_1})$ and $w_{P'_1}(x_{\cA_2})$ are sufficiently small and do not alter this inequality. This leads to a contradiction for~\eqref{eq:counterproof2}, hence $\alpha'=\tilde{\alpha}$ and the claim is proven.  Similarly, since the weights of secondary trajectory tasks are sufficiently small, any case involving an alteration in the assignment of $O_n$ or changes to more than one pair of assignments will contradict the optimality of problem~\eqref{eq:original_problem}.
\end{proof}
\subsubsection{Shadow agent}
For cases where the number of agents is smaller than the number of tasks (i.e. $\abs{\cA} < \abs{\cT}$), we introduce a set of \emph{shadow} agents $\cS^{\cA_i}=\bigcup_l \cS^{\cA_i}_{l}$, where $\abs{\cS^{\cA_i}} = \abs{\cT}-\abs{\cA}$ and $\cS^{\cA_i}_{l}$ are virtual replicas of $\cA_i$ that are considered to have the same location and graph connectivity as $\cA_i$ (communications and computations are handled by the physical $\cA_i$). All the agents in $\cS^{\cA_i}$ can always communicate with physical agent $\cA_i$, and vice versa. If $\cA_i$ is connected with another physical agent $\cA_j$, then all the agents in $\cS^{\cA_i}_{i}$ can communicate with $\cA_j$. There might be more than one shadow agent for a single physical agent.

Assigning tasks to shadow agents means that the corresponding tasks are abandoned, and will not be performed by an actual agent. For simplicity, we let $\cS_l$ denote $\cS_l^{\cA_i}$. We define the weights of the shadow agents as 
\begin{description}
	\item[Trajectory task $P$]: $w_{SP}(x_{\cS_{l}})=-M$, where $M>2\epsilon\inverse$. Effectively, this implies that it is never advantageous, from the optimization perspective, to assign a shadow agent to the trajectory task.
	\item[Online tasks$\{O_j\}$]  $w_{SO_j}(x_{\cS_{l}})=-(d(O_j,x_{\cS_{l}})+\epsilon)\inverse$ This implies that assigning tasks to shadow agents (i.e., abandoning the tasks), from the optimization perspective, is never encouraged; moreover, farther tasks are more likely to be abandoned.
\end{description}

We now show that solving the problem with \emph{shadow agent}s is equivalent to solving it without. Since constraint~\eqref{eq:one_agent_assigned} only ensures that there exists an agent $\cA$ (among regular agents $\cA$ or shadow agents $\cA_S$) that is assigned to $P$; we want to show that $a\in \cA$. 

\begin{proposition}
    Agents in $\cA_S$ will never be assigned to $P$ if $M>2\epsilon\inverse$.
\end{proposition}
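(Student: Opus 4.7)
The plan is to argue by contradiction via a swap argument. Suppose, toward a contradiction, that in some optimal solution $\pmb\alpha^\star$ of the square reformulation, a shadow agent $\cS_l$ satisfies $\alpha^\star_{\cS_l,P}=1$. By constraint~\eqref{eq:primary task}, $\cS_l$ is the \emph{only} agent assigned to $P$, so every physical agent $\cA_i\in\cA$ is assigned (by constraint~\eqref{eq:agent total}) to some task other than $P$. Because $\abs{\cA}\geq 1$, we may fix any such physical agent $\cA_i$; by the structure of the square problem its unique assigned task must be an online task $O_j$ or a secondary/shadow slot, but in the case $\abs{\cA}<\abs{\cT}$ treated here there are no secondary trajectory tasks, so $\cA_i$ is assigned to some online task $O_j$.

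I would then define a perturbed assignment $\tilde{\pmb\alpha}$ obtained from $\pmb\alpha^\star$ by swapping the assignments of $\cA_i$ and $\cS_l$: set $\tilde\alpha_{\cA_i,P}=\tilde\alpha_{\cS_l,O_j}=1$ and $\tilde\alpha_{\cA_i,O_j}=\tilde\alpha_{\cS_l,P}=0$, leaving all other entries untouched. The constraints~\eqref{eq:agent total}--\eqref{eq:optional tasks} are obviously preserved since a swap does not change row or column sums, so $\tilde{\pmb\alpha}$ is feasible. The change in the objective $-\trace(\tilde{\pmb W}\transpose\pmb\alpha)$ is, up to sign,
\begin{equation}
\Delta \;=\; \bigl(w_{\cA_i P}+w_{\cS_l O_j}\bigr)-\bigl(w_{\cS_l P}+w_{\cA_i O_j}\bigr).
\end{equation}

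The key step is to bound $\Delta$ from below using the weight definitions. From the formulas in~\eqref{eq:trajectory_task_weight} and the online-task formula, every physical-agent weight satisfies $0<w_{\cA_i T}\leq \epsilon\inverse$; from the shadow-agent definitions, $w_{\cS_l P}=-M$ while $-\epsilon\inverse\leq w_{\cS_l O_j}<0$. Substituting the worst-case bounds gives
\begin{equation}
\Delta \;\geq\; 0+(-\epsilon\inverse) -(-M)-\epsilon\inverse \;=\; M-2\epsilon\inverse,
\end{equation}
which is strictly positive by the hypothesis $M>2\epsilon\inverse$. Hence $\tilde{\pmb\alpha}$ strictly decreases the objective $-\trace(\tilde{\pmb W}\transpose\pmb\alpha)$, contradicting the optimality of $\pmb\alpha^\star$. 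Therefore no shadow agent can be assigned to $P$.

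The main obstacle I expect is just bookkeeping: making sure the swap is always well-defined (i.e., that a physical agent is available to absorb $P$ and that the task freed up by the physical agent is a legitimate online task under the square reformulation) and that the bound accounts for the worst signs of all four weights simultaneously. Once the worst-case inequality $w_{\cA_i P}\geq 0$, $w_{\cA_i O_j}\leq \epsilon\inverse$, and $\abs{w_{\cS_l O_j}}\leq \epsilon\inverse$ are lined up, the threshold $M>2\epsilon\inverse$ falls out directly, mirroring the reasoning already used for secondary trajectory tasks earlier in the section.
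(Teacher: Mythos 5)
Your proposal is correct and follows essentially the same route as the paper: assume for contradiction that a shadow agent holds $P$, swap its assignment with a physical agent holding an online task, and bound the resulting change in $\trace(\tilde{\pmb W}\transpose\pmb\alpha)$ below by $M-2\epsilon\inverse>0$ using $0<w_{\cA_i\bullet}\leq\epsilon\inverse$ and $-\epsilon\inverse\leq w_{\cS_l O_j}<0$. The only cosmetic difference is that the paper swaps with the specific physical agent co-located with $\cS_l$, whereas you swap with an arbitrary physical agent and rely on the uniform worst-case weight bounds; both yield the same threshold.
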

\begin{proof}    
   By way of contradiction, assume $\pmb\alpha$ be an optimal solution where $\cS_{l}$ is assigned to $P$ with weight $w_{SP}(x_{\cS_{l}})=-M$. Correspondingly, the real agent $\cA_{i}$ corresponding to $\cS_{l}$ is assigned with an online task $O_j$, with weight $0<w_{\cA O_j}(x_{\cA_i})\leq \epsilon\inverse$. Consider a different solution $\tilde{\pmb{\alpha}}$, where $\cS_{l}$ is assigned to $O_j$ with weight $w_{SO_j}(x_{\cS_l})=w_{\cA O_j}(x_{\cA_i})$ and $\cA_{i}$ is assigned to $P$ with weight $w_{AP}(x_{\cA_i})$. Then we have
    \begin{multline}
        \trace(W\transpose \tilde{\pmb\alpha})-\trace(W\transpose\pmb\alpha) \\ 
        =w_{AP}(x_{\cA_i})-w_{SO_j}(x_{\cS_l})-(-M+w_{\cA O_j}(x_{\cA_i}))\\    
        =(M+w_{AP}(x_{\cA_i}))-2w_{\cA O_j}(x_{\cA_i}) \geq M-2\epsilon\inverse.
    \end{multline}
    Since $M>2\epsilon\inverse$, $\trace(W\transpose \pmb\alpha')-\trace(W\transpose\pmb\alpha)>0$, hence $\pmb\alpha$ is not optimal, leading to the contradiction. 
\end{proof}

\section{Online tasks in Co-observation secured map exploration task}\label{sec:application}
In this section, we apply the proposed task assignment algorithm within the co-observation-secured application introduced earlier in~\cref{sec:introduction}. This application builds on previous work in the field \cite{yangmulti,wardega2019resilience} to mitigate risks posed by physical masquerade attacks in multi-robot systems (MRS). Our solution incorporates co-observation and reachability constraints to ensure security while allowing robots to handle both pre-planned and online tasks. As illustrated in~\cref{fig:Flowchart}, the strategy is structured into three main sections:
\begin{figure*}[h]
	\centering
    \subfloat[Online tasks appear\label{fig:Example_case_task_appear}]{\includegraphics[width=0.28\linewidth]{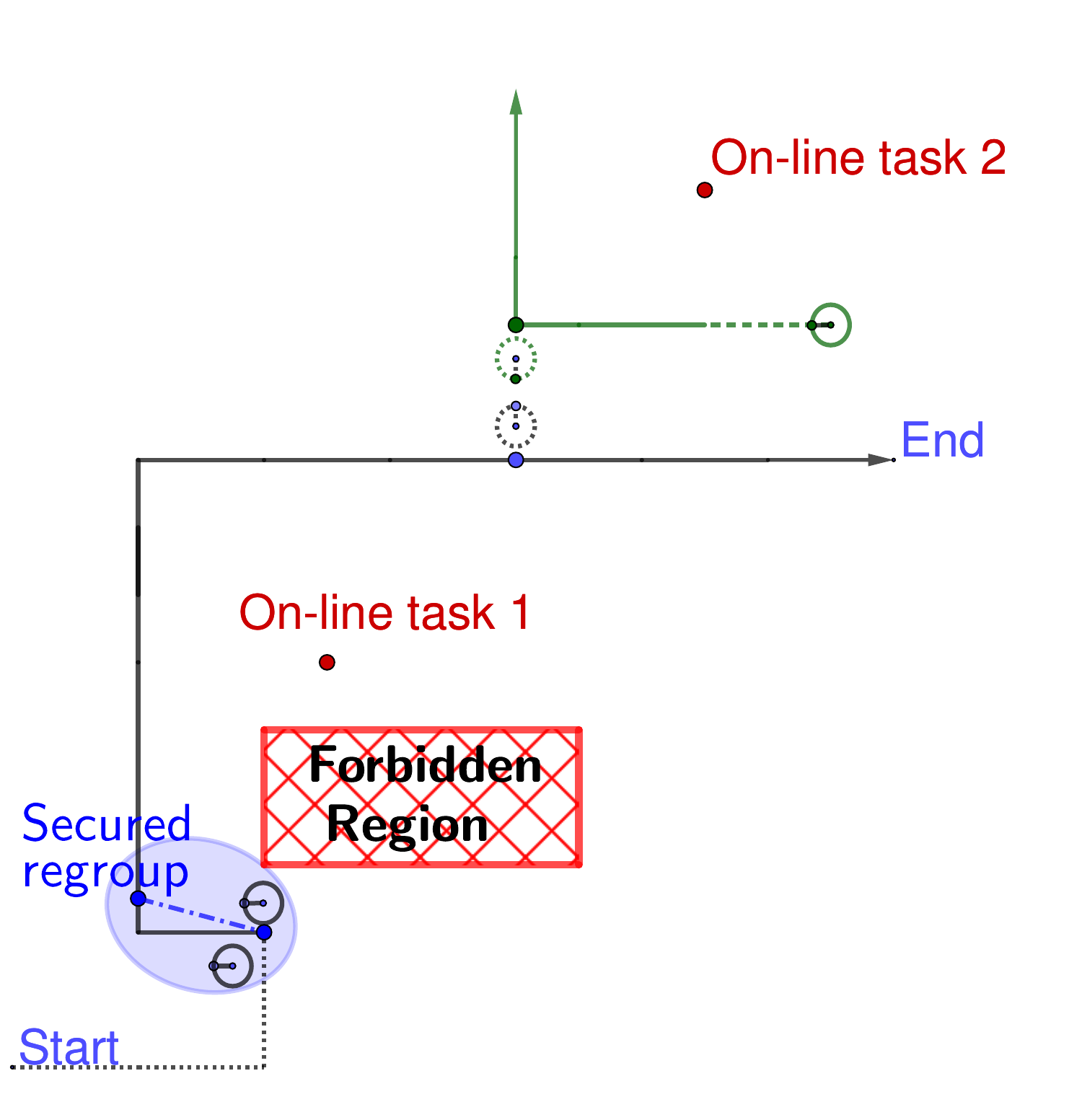}}
    \subfloat[Online task 1 assigned\label{fig:Example_case_task_assigned}]{\includegraphics[width=0.28\linewidth]{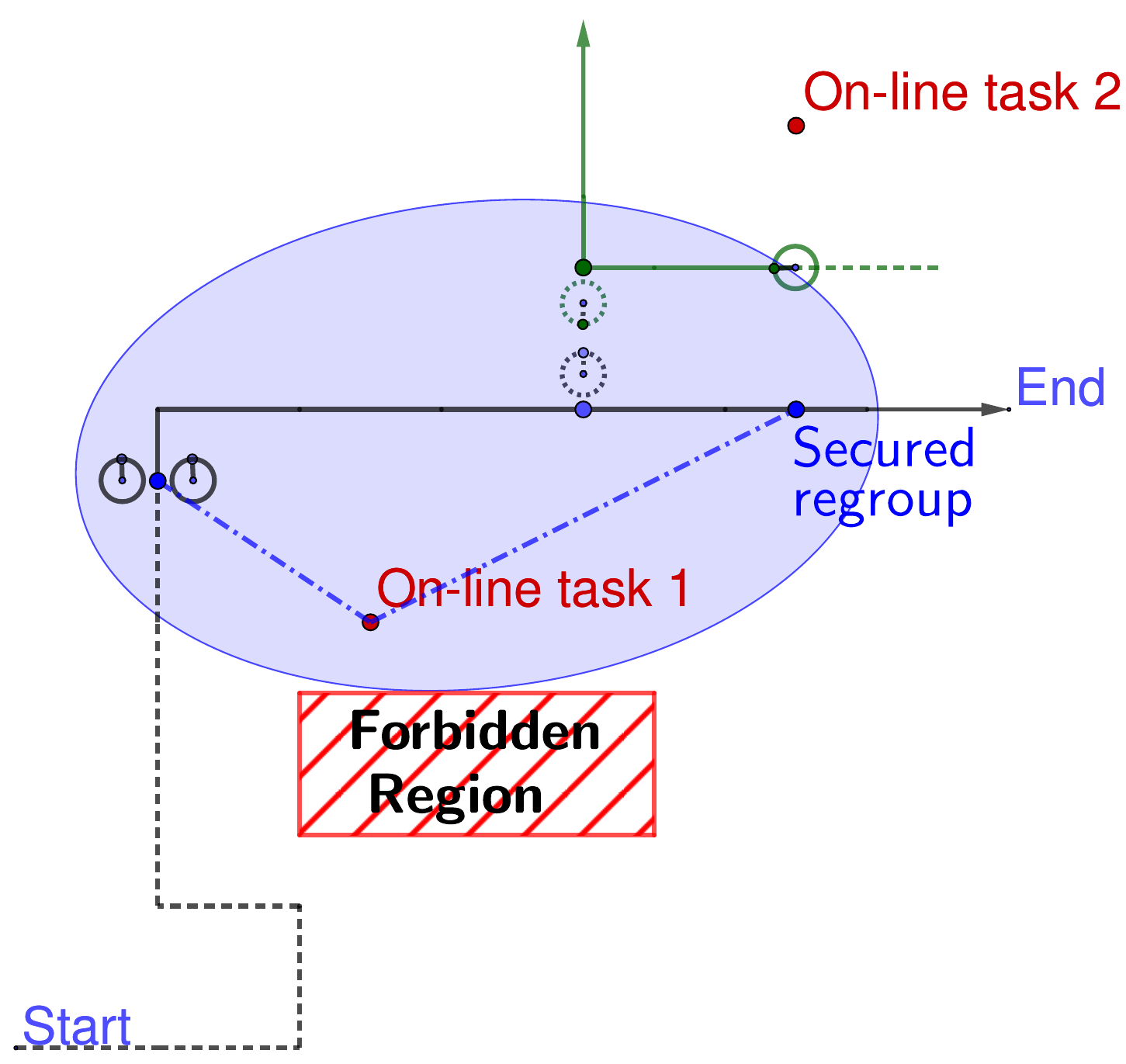}}
    \subfloat[Co-observation\label{fig:Example_case_co-observation}]{\includegraphics[width=0.28\linewidth]{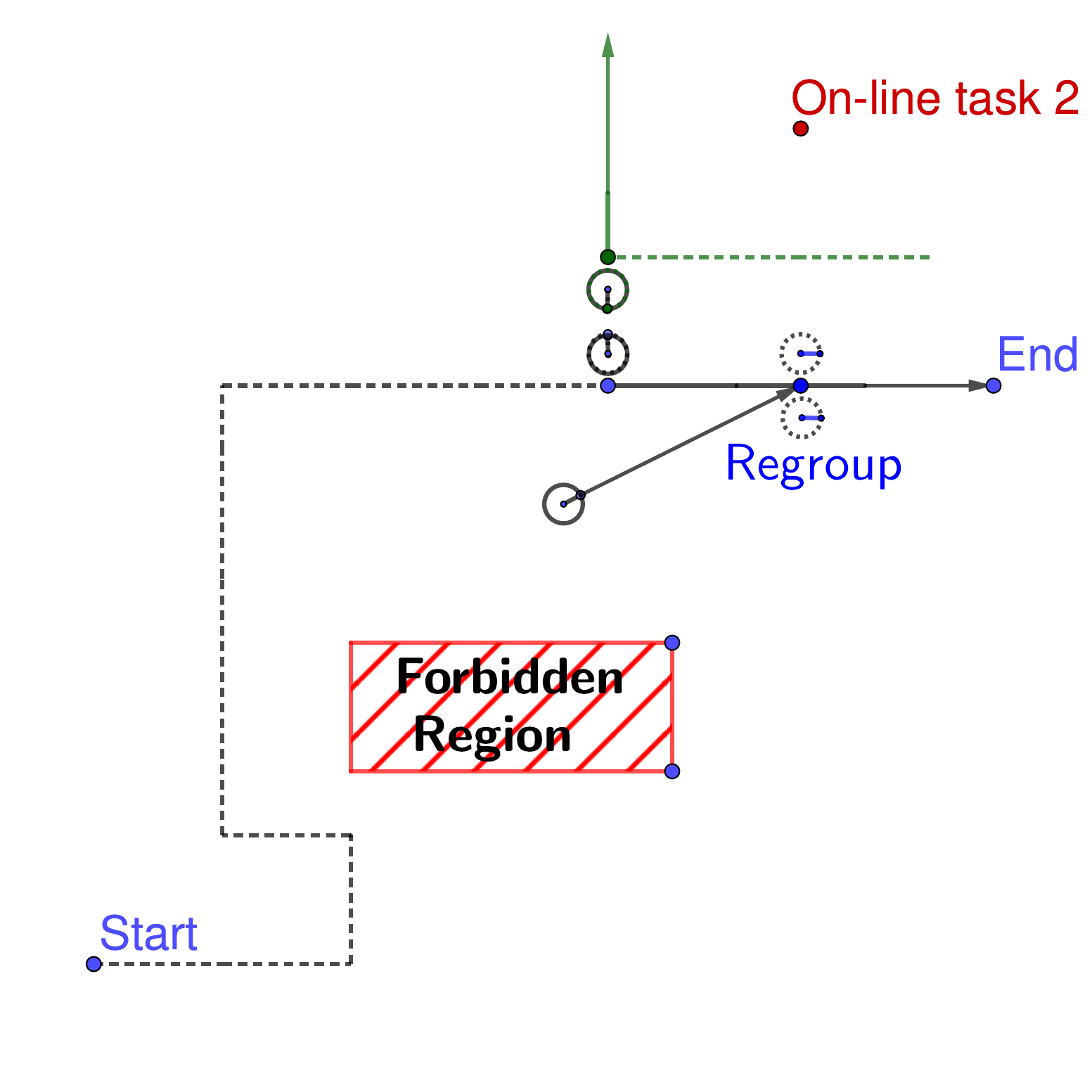}}
    \caption{ Online task 1 and 2 appear during the mission. Solid circles represent the current positions of agents, while dotted circles indicate locations where agents are expected to appear for co-observation. (\ref{fig:Example_case_task_appear}) Green team only have no extra robot available to assign online task 2 with. The latest regroup time for black team do not allow a secure deviation for available online tasks. Thus, no task is assigned. (\ref{fig:Example_case_task_assigned},\ref{fig:Example_case_co-observation}) A safe regroup time is found for black team to fulfill online task 1, one robot got assigned with trajectory task for co-observation with green team. The other robot deviated for online task 1 is required to regroup at the pre-defined regroup time and location, and co-observe with the robot with trajectory task. }\label{fig:5_Example_set}
    \label{fig:Simulation-results}
\end{figure*}

\begin{itemize}
    \item \textbf{Regroup Time Calculation:} Determine whether an unassigned online task can be fulfilled while still satisfying a security condition given by \emph{reachability ellipsoid}; if this is possible, it computes the latest regroup time; this is illustrated in \cref{sec:regroup_computation}.
    \item \textbf{Task Assignment Algorithm} Apply the algorithm introduced in~\cref{sec:task_assignment} to find assignments of agents to trajectory and online tasks via distributed computations. 
    \cref{sec:task-assignment-security} discusses the reasons why solutions to the assignment never break the security guarantees given by co-observations.
    \item \textbf{Online Control Scheme} Use the \emph{CLF-QP} algorithm to compute reference control inputs and use CBF for \emph{Signal Temporal Logic (STL)} tasks as a security filter to guarantee the security of the system. 

    This part transforms the time and locations produced by the online assignment into \emph{spatial constraints} (follow a trajectory, or reach a given location) and \emph{timing constraints} (to avoid missing co-observation times).
    These constraints and their application to the real-time control of agents, are discussed in \cref{sec:control-design}
\end{itemize}
In this section, we first introduce the preliminaries in~\cref{sec:definition}; then the regroup time calculation is introduced in~\cref{sec:regroup_computation}; the CBF-CLF based online control scheme is introduced in~\cref{sec:CBF}.
\subsection{Preliminaries for co-observation-secured planning}\label{sec:definition}
In this section, we define the potential attacks faced by the MRS in the applied scenario and introduce the corresponding security requirements considered in the path planning phase. These requirements are then integrated into the online control problem in later sections to ensure the secure execution of unplanned online tasks.

\subsubsection{Forbidden region}
As part of our scenario, we assume that the environment contains \emph{forbidden regions} $\cF = \bigcup_k \cF_k$ in which none of the robots should enter (e.g. because they contain sensitive information or human operators). Forbidden region are modeled as convex polygons (shown as the red rectangle in~\cref{fig:5_Example_set}; one can use multiple, possibly overlapping, polygons for non-convex regions).

\subsubsection{Plan-deviation attacks}\label{app:attack-model}
We consider \emph{plan-deviation attacks} as introduced by~\cite{wardega2019resilience} to model potential threats to multi-agent systems. Specifically, we assume that a robot in the system has been compromised by an attacker who intends to violate the security constraints by entering forbidden areas undetected.
The attacker has full knowledge of the motion plan and aims to masquerade the compromised robot as a legitimate one.
The attacker intends to let the compromised robot perform deviations from the nominal plan and seek access to forbidden areas. 
We refer to these malicious deviations as plan-deviation attacks. An undetected plan-deviation occurs when a compromised robot deviates from the motion plan while providing a false self-report to the system about its location. Under our model, we consider such deviations to go undetected by the system as long as the self-reports of all other robots remain unchanged.

\subsubsection{Co-observation schedule}\label{sec:coobservation_intro}
The self-reports within the MRS is no longer reliable considering the potential of plan-deviation attacks. In previous works~\cite{wardega2019resilience,yang2021multi}, we proposed to leverage onboard sensing capabilities to integrate a mutual observation plan into the multi-agent trajectory. 
We assume that we are given a \emph{co-observation schedule}, a sequence of waypoints where two or more sub-teams of robots are required to meet (shown as dotted robots in~\cref{fig:5_Example_set}) such that, for each sub-team, any faulty or attacking agent breaching security specifications (i.e. in this case, trespassing forbidden region) would inevitably violate the plan, ensuring their actions are detectable. 

\subsubsection{Reachability region}
 Paired with the co-observation schedule, we introduced the concept of the \emph{reachability region} to analyze whether a compromised robot could reach any forbidden area between scheduled co-observation locations. The reachability region is defined as the set of all points in the free configuration space that a robot can feasibly reach while traveling from one co-observation location to the next one within the given time interval. For simplicity, this analysis assumes a robot with a first-order integrator model and a maximum velocity cap $v_{max}$, allowing us to over-approximate the reachability region by a \emph{reachability ellipsoid}.

\begin{definition}\label{sec:ellipsoidal definition}
    Consider a robot $i$ starting from $q_{1}$ at time $t_1$ and reaching $q_{2}$ at time $t_2$. The \textbf{reachability ellipsoid $\cE$} is defined as the region $\cE^{q_2}_{q_1} = \mathcal{E}(q_1,q_2,t_{1},t_{2})=\{\tilde{q}\in\mathbb{R}^n: d(q_1,\tilde{q})+d(\tilde{q},q_2)<2a\}$, where $a=\frac{v_{max}}{2}(t_2-t_1)$, and $d(\cdot,\cdot)$ denotes the Euclidean distance between two points.
\end{definition}

 If the reachability region does not intersect with any forbidden regions $\cE^{q_2}_{q_1}  \cap \cF = \emptyset$, it can be guaranteed that the robot starting at $(q_1,t_1)$ is unable to reach $(q_2,t_2)$ if it trespassed any forbidden region. 
    
Accordingly, we can provide a formal security guarantee against plan-deviation attacks for a multi-agent trajectory.
\begin{definition}\label{rmk:revised-security}
A multi-robot trajectory is secured against plan-deviation attacks if there exists a co-observation plan such that the reachability region between each consecutive co-observation does not intersect with any forbidden regions.
\end{definition}

\subsection{Regroup time Computation}\label{sec:regroup_computation}
When addressing unplanned tasks that require deviations from the planned trajectory, robots must ensure that these deviations do not compromise security.
We leverage the concepts of co-observation schedules and reachability regions to pose constraints on robots deviating and rejoining teams, as made rigorous in the following definition.
\begin{definition}
    A deviation for robot $\cA_i$ in team $\cA$, deviating from the trajectory at $(x_1, t_1)$, and rejoining it at $(x_2, t_2)$, is secured if the following conditions are met: \begin{enumerate}
    \item At least one other robot in the team $\cA$ remains on the planned trajectory between $t_1$ and $t_2$.
    \item The reachability region between $(x_1, t_1)$ and $(x_2, t_2)$  does not intersect any forbidden regions, i.e.,  $\cE^{x_2}_{x_2} \cap \mathcal{F} = \emptyset$.
    \end{enumerate}
\end{definition}

It becomes apparent from the definition that, at every step of a trajectory, an agent is free to deviate if it can rejoin its team quickly enough. We therefore introduce the following.
\begin{definition} 
    The \textbf{latest secured regroup time} for a deviation at $(x_1, t_1)$ is defined as the largest time $t_2\in [t_1,T]$ such that rejoining at $(x_2,t_2)$ can secure this deviation.
\end{definition}

To avoid complex online computation, we propose to map each waypoint $q_t$ with its latest secured regroup time $t^r_{q_t}$.
\begin{definition} 
    The \textbf{latest regroup time lookup table} $\{(q_t ,t^r_{q_t})\}^T_{t=1}$ maps each waypoint $q_t$ on the planned trajectory $\{\vq_p\}$ to its latest secured regroup time $t^r_{q_t}$.
\end{definition}

The lookup table can be built by applying \cref{alg:lookup_table} for each one of the $N_p$ reference trajectories.
\begin{algorithm}
    \caption{Lookup Table construction for sub-team $N_p$}
    \label{alg:lookup_table}
    \begin{algorithmic}[1]
    \Require Planned trajectory $\{\mathbf{q}_p\}$, forbidden regions $\mathcal{F}$
    \State \textbf{Initialization:} Lookup table $\mathcal{L}$
    
    \For{$i = 1$ to $T$} 
        \State $q_1 \gets q_i \in \{\mathbf{q}_p\}$
        \State $j \gets i+1$, $q_2 \gets q_j \in \{\mathbf{q}_p\}$
        \While{Section $(q_i,q_j)$ is secured ($\cE^{q_2}_{q_1} \cap \cF = \emptyset$) }
            \State $j \gets j+1$, $q_2  \gets q_j \in \{\mathbf{q}_p\}$
        \EndWhile
        \State $t_r \gets j-1$
        \State Store $\{(q_i, t_r)\}$ in $\mathcal{L}$
    \EndFor
    \end{algorithmic}
\end{algorithm}

The lookup table is used when an online task $O^n_j$ appears or remains unassigned in a previous time step. Agents check for the latest safe regroup time $t_r$ for the next timestep $t$: if $d(p_t,x_{O^n_j})+d(p_{t_r},x_{O^n_j})>v_{max}(t_r-t)$, $O^n_j$ is outside the reachability region $\cE_t^{t_r}$; thus, deviation for $O^n_j$ can not be secured, $O^n_j$ will not be assigned (\cref{fig:Example_case_task_appear}). Otherwise, $O^n_j$ is passed to the task assignment algorithm (\cref{fig:Example_case_task_assigned}).

Since the lookup table is precomputed and stored locally on each robot, once a task is assigned, all agents (both deviating and non-deviating) will be aware of the regroup time without explicit communications. 

\subsection{Online task assignment problem}\label{sec:task-assignment-security}
As shown in~\cref{fig:Flowchart}, after determined the secured regroup time, available tasks are updated and passed to the task assignment algorithm introduced in~\cref{sec:task_assignment}. The priorities in the assignment ensure that at least one agent will satisfy the trajectory task $P$ to fulfill the co-observation schedule, while the extra robots are assigned to available online tasks $O_j$ (\cref{fig:Example_case_co-observation}). 

Note that the overall security is maintained, and none of the robots will be able to reach the forbidden zones because:
    \begin{itemize}
    \item For the robots assigned to trajectory tasks $P$, this type of deviation would imply breaking the constraints imposed by the co-observation schedule.
    \item For the robots assigned to the online tasks $O_j$, this type of deviation would cause them to miss the implicit co-observation constraints with the rest of the team at the regroup location and time.
\end{itemize}

\subsection{Online Control Scheme}\label{sec:CBF}
The reference trajectories and co-observation requirements are defined in terms of waypoints in discrete time. To enable execution in continuous time for real-world scenarios, we propose an online control framework for each individual robot. In this framework, robots assigned to trajectory and secondary trajectory tasks follow their predefined secure paths, while robots assigned to online tasks move toward the task location, complete the task, and then return to the reference trajectory. Additionally, to maintain security, the framework ensures that robots following trajectory tasks reach their designated co-observation locations within the required time, while robots performing online tasks must return to their trajectory and rejoin the group before the regroup time.

\subsubsection{Robot dynamics}
Consider robots with an affine-input dynamical system
\begin{equation}
\dot{x}=f(x)+g(x)u
\end{equation}
with $f$ and $g \neq 0$ locally Lipschitz continuous, $x \in \mathbb{R}^m$ represents the location of the robot and $u \in U \subset \mathbb{R}^{n}$ is the control input.

\subsubsection{Navigation and timing control constraints via CBF and CLF functions}

All assignments and security requirements can be represented as navigation and timing control constraints, which are formulated via CBF and CLF functions. The definitions of these functions are provided below.

Given a set $\mathcal{C}$ defined as $\cC=\{x\in\real{m},t\in\real{}_{\geq 0}:h(x,t)\geq 0\}$ for a continuously differentiable function $h(x):\mathbb{R}^{m}\times\real{}_{\geq 0} \rightarrow \mathbb{R}$. The function $h$ is called a \emph{CBF}, if there exists a class $\mathcal{K}$ function $\beta $ such that 
\begin{equation}\label{eq:CBF_constraint}
\sup_{u \in U} \frac{\partial h(x,t)}{\partial x}(f(x)+g(x)u)+\frac{\partial h(x,t)}{\partial t} +\beta(h(x,t)) \geq 0.
\end{equation}
In the following content, we denote $\frac{\partial h(x,t)}{\partial x}(f(x)+g(x)u)+\frac{\partial h(x,t)}{\partial t}$ as $\dot{h}(x,t)$ for simplicity. Controllers satisfying the CBF constraint~\eqref{eq:CBF_constraint} ensure that the set $\cC$ remains \emph{forward invariant}. This means that if the initial state satisfies $x(0)\in \cC$, the state remains in $x(t)\in \cC, \forall t>0$.

In this application, we use CBF timing constraints as security filters to ensure that: \begin{itemize}
    \item Robots with trajectory tasks reach their co-observation locations within the required time; 
    \item Robots with online tasks return to their trajectory and rejoin the group before the regroup time. 
\end{itemize}

Similarly, consider a continuously differentiable positive definite function $V : \real{m}\rightarrow \real{}$. $V$ is called a \emph{CLF} if there exist a class $\cK$ function $\gamma$, such that:
\begin{equation}
    \inf_{u\in U} [L_f V(x) + L_g V(x)u + \gamma(V(x))] \leq 0.
\end{equation}
In this application, we use CLF as navigation constraints to ensure that: \begin{itemize}
    \item Robots assigned trajectory and secondary trajectory tasks follow their predefined secure trajectories;
    \item Robots assigned to online tasks move toward the task location, complete the task, and then return to the reference trajectory.
\end{itemize}

\subsubsection{CLF functions for navigation}
We define the CLF as $V_t(x) = d(x,q_i)$ for trajectory task and secondary trajectory tasks, where $q_i \in \vq$ is the planned waypoint for the next timestep. Since the reference trajectory is in discrete time, $V_t$ remain the same for $t\in [i-1,i)$, and switch for the next reference point $V_t(x) = d(x,q_{i+1})$ for $t\in[i,i+1)$. And $V_{O_j}(x) = d(x,x_{O_j})$ is the candidate CLF for online tasks. 

\subsubsection{CBF functions via STL}
Both trajectory and online task security requirements are enforced through co-observations at specific times and locations. For trajectory tasks, co-observations occur between sub-teams based on the co-observation schedule. For online tasks, they occur within a sub-team, ensuring the returning robot is observed at the regroup time along the reference trajectory. This also adds additional co-observation requirements for the trajectory task robot. For online control, we use STL to formally encode these tasks with strict deadlines, ensuring their satisfaction through Control Barrier Functions (CBFs)~\cite{lindemann2018control}.

A predicate $\mu$ is derived from evaluating a function $h:\real{m}\rightarrow \real{}$ as:
\begin{equation*}
    h(x)\coloneqq \begin{cases}
        \text{True}& \text{if } h(x)\geq 0, \\
        \text{False} & \text{if } h(x) <0.
    \end{cases}
\end{equation*}
The STL syntax defines a formula as:
\begin{equation*}
    \phi ::= \text{True} | \mu | \neg \phi | \phi_1 \wedge \phi_2 | \phi_1 \textbf{U}_{[t_1,t_2]} \phi_2,
\end{equation*}
where $\phi_1$ and $\phi_2$ are STL formulas and $t_1,t_2\in\real{}_\geq0$ with $t_2>t_1$. Notice that $\textbf{F}_{[t_1,t_2]}\phi$ and $\textbf{G}_{[t_1,t_2]}\phi$ can be defined in terms of $\textbf{U}$, thus are omitted in this formula.

In this paper, we consider the following STL fragment:
\begin{align}
    &\psi ::= True | \mu | \neg \mu | \psi_1 \wedge \psi_2 ,\\
    &\phi ::= G_{[a,b]} \psi | F_{[a,b]} \psi | \psi_1 U_{[a,b]} \psi_2 | \phi_1 \wedge \phi_2,
\end{align}
where $\psi$ defines state-based conditions that involve only Boolean logic with $\psi_1$ and $\psi_2$ are formulas of $\psi$. While $\phi$ defines temporal properties that are built from class $\psi$, and $\phi_1$ and $\phi_2$ are formulas of the temporal operators.

To better handle these requirements via CBF, we model each individual requirement, i.e. co-observation at location $q$ at time $c$, using STL formulas. The formula $\phi_i$ is defined as:
\begin{equation}
\phi_i := \phi_{i1} \wedge \phi_{i2},
\end{equation}
 where:
 \begin{itemize}
    \item $\phi_1 := F_{[0,c]} \psi$ to ensure that the robot maintains the capability of  reaching the scheduled co-observation location on time,
    \item and $\phi_2 := G_{[c,c]} \psi$ to ensure that the robot arrives at co-observation exactly at the scheduled time,
    \item $\psi := (d(x,q)\leq r_1) \wedge (d(x,x')\leq r_2)$, where $d:\real{m}\times\real{m}\rightarrow \real{}$, as the Euclidean distance between locations, i.e. $d(x_1,x_2) = \norm{x_1-x_2}_2^2$. $x'$ being the location of the other robot for the co-observation and $r_1$, $r_2$ being the maximum offset allowed.
 \end{itemize}
Since the co-observation locations are fixed, we omitted the term $\norm{x-x'}\leq r_2$ for simplicity. Multiple co-observation requirements are combined through conjunctions, for example, $\phi = \phi_1 \wedge \phi_2 \wedge \phi_3$.

\subsubsection{Controller framework}\label{sec:control-design}
For implementation, the CLF-based controller is used to apply the trajectory tracking and online task target tracking requirements. 
On the other hand, the CBF-based security filter is used to apply security-related requirements. Using the CBF design method for STL tasks introduced in~\cite{lindemann2018control}, we design the CBF for simple temporal operators. Consider formula $F_{[a,b]} d(x,q)\leq r$, the candidate CBF is designed as:
\begin{equation}
    h(x,t) =\gamma_1(t) + \frac{r - d(x,q)}{v_{max}},
\end{equation}
where $v_{max}$ is the maximum speed for robot
\begin{equation}
    \gamma_1(t) = -\frac{a}{b}t+a.
\end{equation}

For formula $G_{[a,b]} d(x,q)\leq r$, the candidate CBF is designed as:
\begin{equation}
    h(x,t) =\gamma_2(t) + r - d(x,q),
\end{equation}
with 
\begin{equation}
    \gamma_2(t) = \mu e^{-\epsilon t} - \sigma,
\end{equation}
where $\epsilon$ is the decay rate to be designed, $r_1$  is a sufficiently small offset variable that ensures the function $\gamma_2(t)\leq 0$ for all $t\in [a,b]$, and $\mu= r_1 e^{\epsilon a}$. 

For conjunctions $\phi_1 \wedge \phi_2$, the candidate CBF is designed as follows:
\begin{equation}
    h(x,t) = -ln(e^{-h_1(x,t)} + e^{-h_2(x,t)}),
\end{equation}
where $h_1$ and $h_2$ are candidate CBF for $\phi_1$ and $\phi_2$ respectively.

To avoid conservatism and potential conflicts arising from multiple constraints, we focus solely on the CBF for the next scheduled co-observation and regroup co-observation. Once a co-observation is fulfilled, the corresponding CBF is deactivated.

We then propose to first solve the following local constrained optimization problem to get the optimal reference control law $u_{ref}$, for robot $i$, we have
\begin{equation}
\begin{aligned}
u_{ref}=&\argmin_{u_i\in U}&& u\transpose Q u \\
        & s.t.&& L_f V_t(x) + L_g V_t(x)u + \gamma(V(x))\\ 
        &    && \qquad \qquad \leq (1-\alpha_{iP}+\sum_j \alpha_{iP'_j})\cM \\
        &     && L_f V_{O_j}(x) + L_g V_{O_j}(x)u + \gamma(V(x))\\
        &     && \qquad \qquad \leq (1-\alpha_{iO_j})\cM, \quad \forall \{O_j\},
 \end{aligned}
\end{equation}
where $Q\in\real{n\times n}$ is a positive semi-definite weight matrix, and $\alpha_{i,P}$, $\alpha_{i,P'}$ and $\alpha_{i,O_j}$ are the assignment coefficient of robot $i$ for trajectory tasks, secondary trajectory task and online tasks respectively.

\begin{remark}\label{rmk:alpha-in-control}
    As noted in \cref{rmk:not-binary}, the assignment result $\alpha_{ij}$ is not always strictly binary. To incorporate $\alpha$ into constraint prioritization, we introduce a sufficiently large penalty constant $\cM$ and scale the constraint terms using $\alpha$. When $\alpha_{ij}$ is close to 1, the right-hand side of the constraints approaches $0$, giving higher priority to the corresponding CLF. Conversely, when $\alpha_{ij}$ approaches $0$, the constraint is relaxed, reflecting lower priority. A similar approach is applied in the CBF-based control scheme.
\end{remark}

Then we use CBF as a security filter:
\begin{equation}
    \begin{aligned}
        \min_{u\in U} & \quad (u-u_{ref})\transpose (u-u_{ref})&\\
        s.t. & \dot{h}_{O_j}(x,t) + \beta_T(h_{O_j}(x,t))\geq -(1-\alpha_{iO_j})\cM, \quad \forall \{O_j\} \\
            & \dot{h}_{P}(x,t) + \beta_T(h_{P}(x,t))\geq -(1-\alpha_{iP})\cM \\
            &\dot{h}_{O_j}(x,t) + \beta_T(h_{O_j}(x,t))\geq -(1-\alpha_{iP})\cM \\
            & L_f h_c(x) + L_g h_c(x)u + \beta_c(h_c(x))\geq 0, 
    \end{aligned}
\end{equation}
where $h_{O_j}$ is the candidate CBF for regroup co-observation associated with online task $O_j$, $h_{P}$ is the candidate CBF for the next scheduled co-observation, and $h_c$ represents CBF for collision avoidance and other time-invariant safety requirements~\cite{Li2017CBF}.

By using the two-step approach above (computation of the reference, followed by a security filter), the CLF constraints are relaxed while prioritizing the CBF constraints to maintain security. This avoids the potential conflict and infeasibility that may arise when combining CLF and CBF constraints into a single QP, where even penalized CLF constraints can interfere with strict safety enforcement. As discussed in~\cite{2021Reis}, such a combined formulation can also introduce undesired equilibrium which do not correspond to the minimum of the Lyapunov function. For example, when a robot gets blocked while executing an online task and a detour will cause it to miss the regroup co-observation, the robot prioritizes the regroup and abandons the online task.

\section{Simulation}\label{sec:simulation}
\begin{figure}
    \centering
    \includegraphics[width=0.8\linewidth,trim = 1cm 1cm 1cm 1cm, clip]{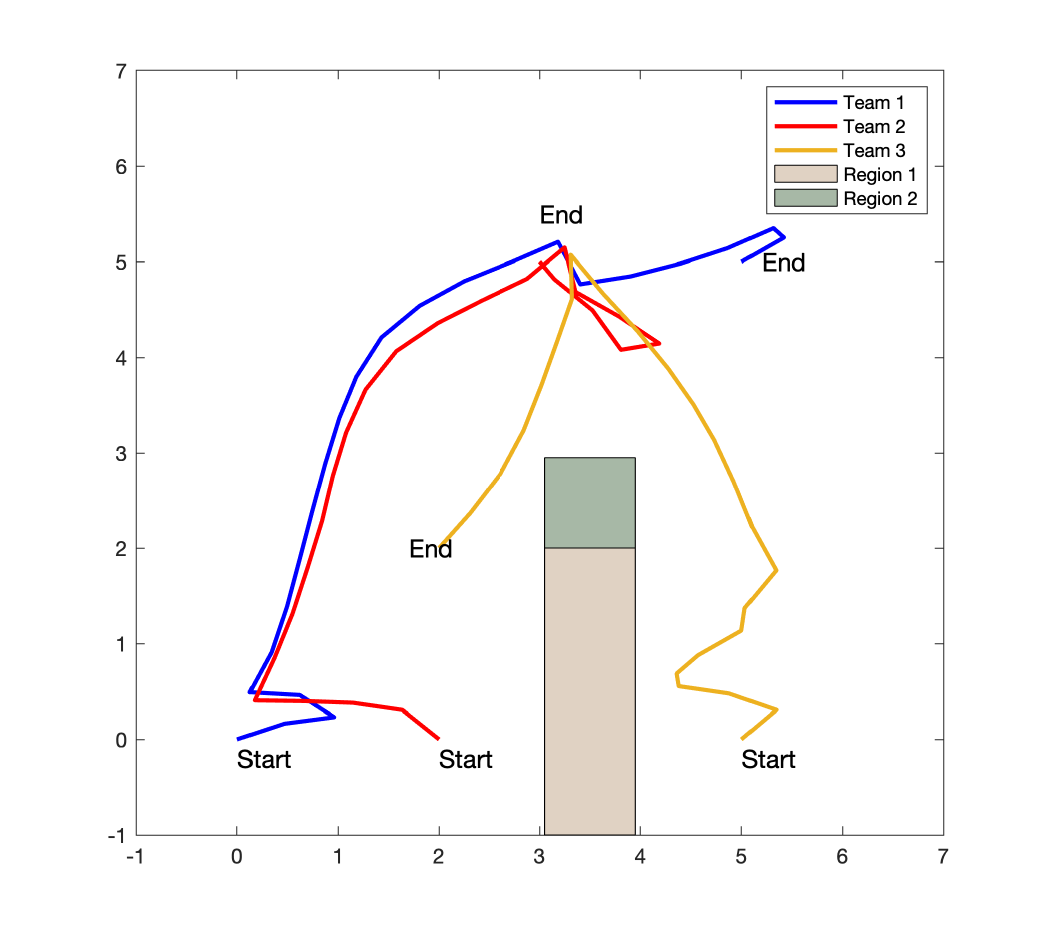}
    \caption{Secured 3-team reference path generated offline, region 1 is a forbidden region and region 2 is an obstacle.}\label{fig:PrePlanedPath}
\end{figure}

\begin{figure*}[h]
    \centering
    \subfloat[Time $t=12$. Agent 1 has been assigned to online task 1 while agent 2 and 3 follow the trajectory.\label{fig:t=12}]{\includegraphics[width=0.3\linewidth,trim = 1cm 1cm 1cm 1cm, clip]{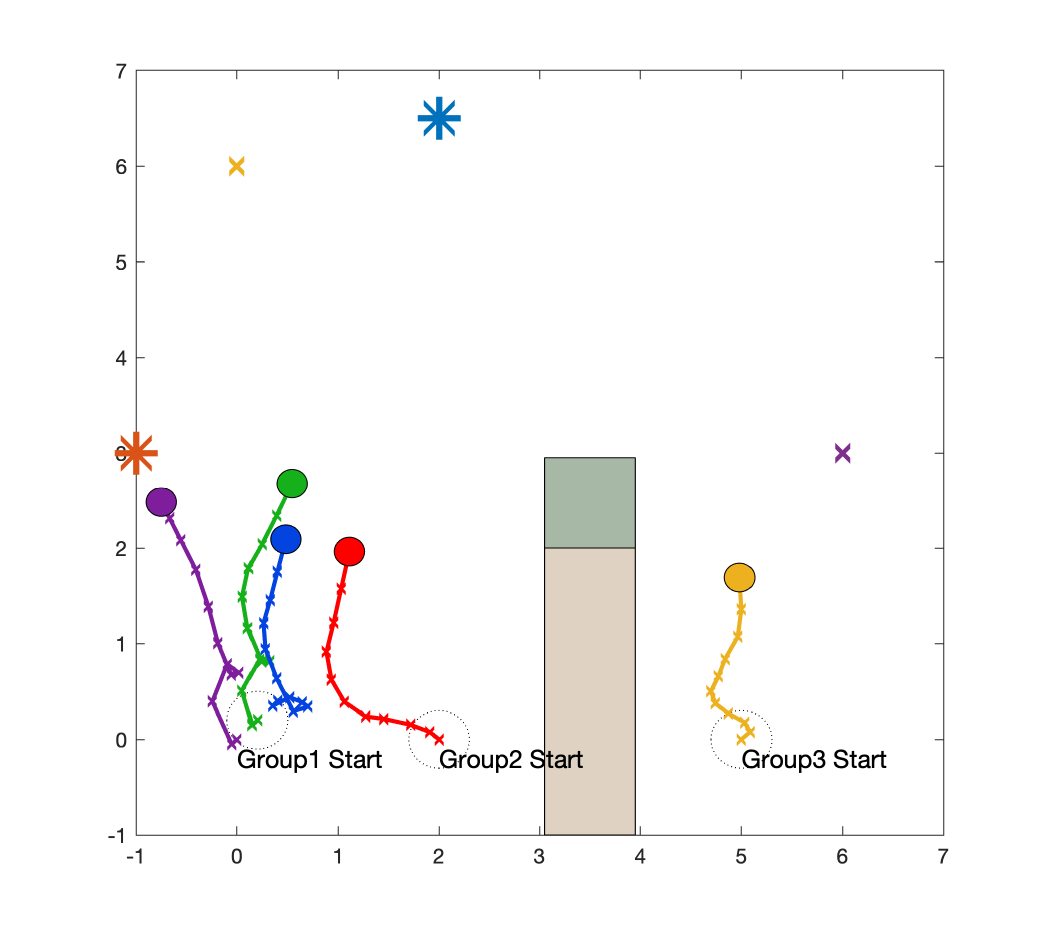}}
    \quad
    \subfloat[Time $t=20$. Agent 1 gets back to the trajectory and agent 2 has been assigned to online task 2.\label{fig:t=20}]{\includegraphics[width=0.3\linewidth,trim = 1cm 1cm 1cm 1cm, clip]{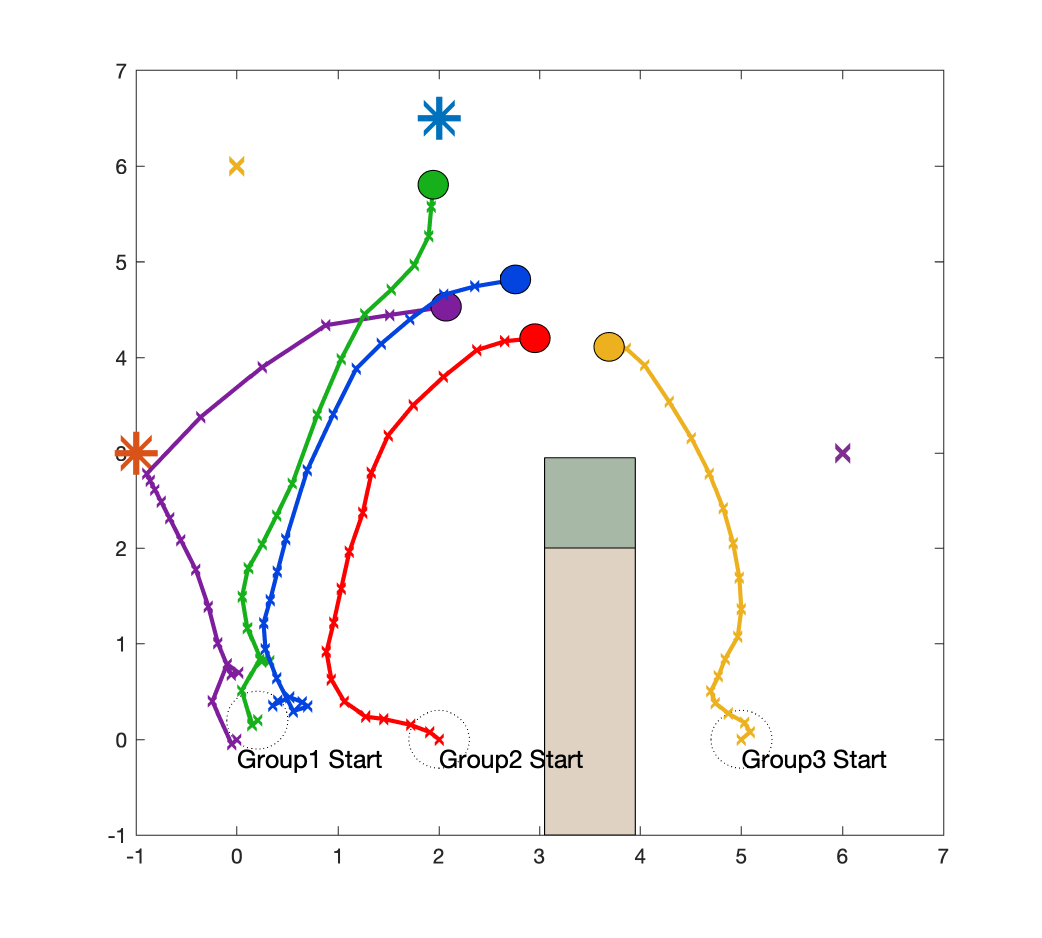}}
    \quad
    \subfloat[Runtime simulation with online tasks appear during the mission.\label{fig:OnlineFullpath}]{\includegraphics[width=0.3\linewidth,trim = 1cm 1cm 1cm 1cm, clip]{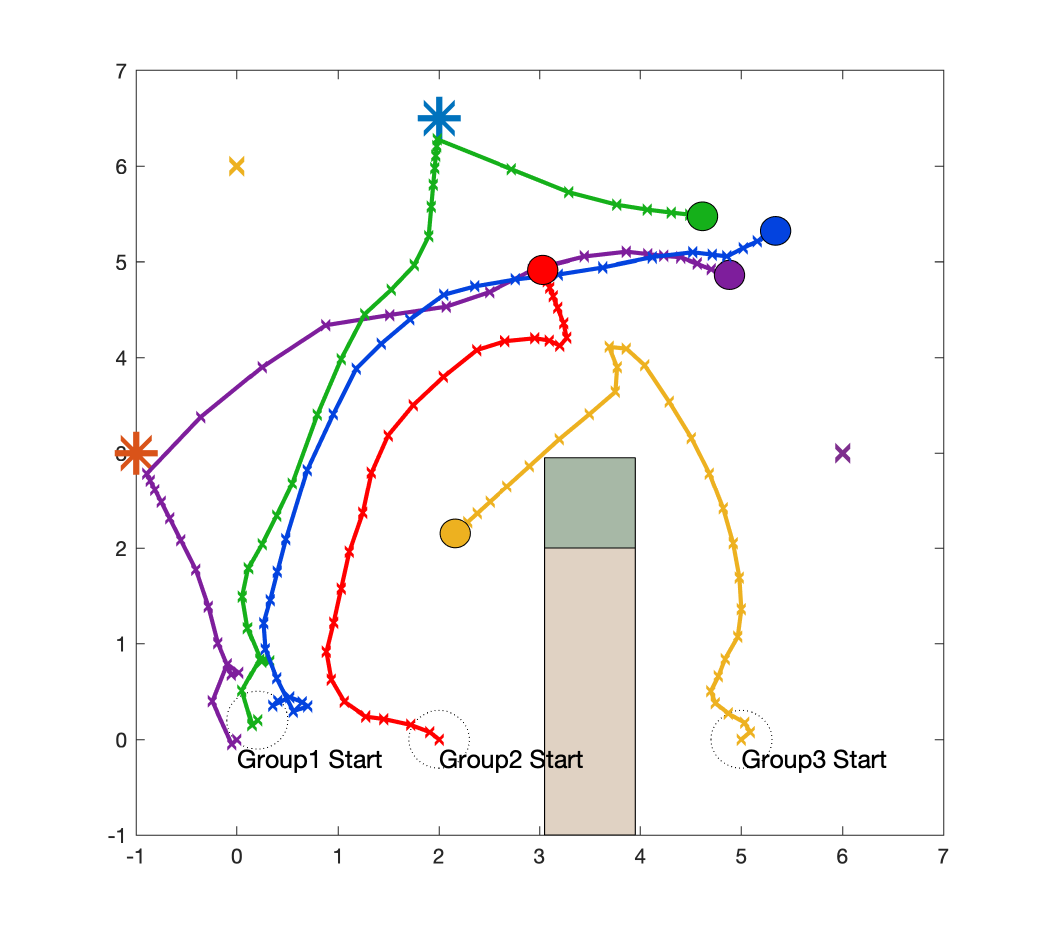}}
    \caption{ 3 robots form blue sub-team (following the blue trajectory) in \cref{fig:PrePlanedPath} to deal with online tasks. Sign $\ast$ indicating fulfilled security-verified online tasks, and sign $\times$ indicating unassigned online tasks.}
\end{figure*}
    
In this section, the proposed strategy is tested using a map exploration secured with a co-observation schedule for three sub-teams. We assume that the environment is a square region with the coordinate origin $(0,0)$ located in the bottom-left corner and the edge length of $8$ units. In the test area, \emph{zone 1} is an obstacle and \emph{zone 2} is a forbidden area.  The robot dynamics are modeled using a first-order single-integrator system, which is commonly used to validate high-level coordination strategies.  All robots are set to have a maximum velocity of $0.5m/s$ with a total task time of $20s$. The reference trajectory is generated using the method in \cite{yang2021multi} alongside a co-observation plan: team $1$ and $2$ meet at time $8$ and $14$, while team $2$ and $3$ meet at time $18$. The reference trajectory and the environment setup are shown in~\cref{fig:PrePlanedPath}. In real-time simulation, group $1$ has three robots in the sub-team while group $2$ and $3$ have one. We will focus on the task assignment performance and deadline setup for group 1. 

During the task period, two out of four \emph{online} tasks appeared have been fulfilled. When the mission first began, there were no \emph{online} tasks, and all agents were performing the trajectory task. When \emph{online} task $2$ appears and a secured regroup time was found at $t=24$ (as shown in~\cref{fig:t=12}), agent 1 was assigned with \emph{online} task 1. Meanwhile, agent 1 set up a deadline CBF to guarantee the return to reference trajectory before the \emph{regroup} time. 

When \emph{online} task 1 and 3 appear and are sent to the assignment algorithm later (as shown in~\cref{fig:t=20}), \emph{online} task 1 is assigned to agent 2, while task 3 is assigned to a shadow agent due to an insufficient number of agents (agent 3 was assigned to \emph{trajectory} task and agent 1 was still working on getting back to the reference trajectory). Online task 4 is never sent to the assignment process because it is always outside the secure reachability regions. 

During the entire task process, agent 3 strictly follows the trajectory and the co-observation schedule to vouch for the security of the team. The final result is shown in~\cref{fig:OnlineFullpath}. Both the task and co-observation had been satisfied, and when agents separate, they successfully return to the reference trajectory before the \emph{regroup} deadline.

\section{Conclusion and Future Work}\label{sec:Conclusion}
We proposed a distributed task assignment algorithm that dynamically allocates robots with different priorities. Using an inexact ADMM-based approach, the problem is decomposed into separable and non-separable subproblems, with the latter solved via projected gradient descent through local communication. This distributed formulation enables efficient coordination of security-related high-priority tasks and unplanned optional online tasks. We integrated this algorithm into a comprehensive framework that enables MRS to safely handle unplanned \emph{online} tasks, validated through real-time simulation. The proposed approach consists of a security analysis to determine whether an online task can be executed securely and, if so, the required time and location for the robot fulfilling it to return to the team. It also includes the distributed task assignment algorithm and an online controller that fulfills the assigned tasks while using a CBF-STL-based security filter to enforce security requirements.

In the future, we plan to investigate the effects of network topology changes during task execution. For instance, team composition may vary as robots join other sub-teams or move out of communication range while completing online tasks. We will further analyze the impact of these changes to ensure our approach remains robust and effective across different scenarios. Another future direction is the application of our algorithm to heterogeneous MRS. While robots can incorporate their specialization through local constraints, the current algorithm struggles with handling abandoned tasks when there are insufficient specialized robots to fulfill them. This limitation arises because the algorithm requires inserting shadow agents in advance. Addressing this challenge will be a key focus of our future work.

\appendix
\subsection{Proof of Distributed $\vz$-update}\label{app:proof}
First we consider the convergence of single iteration for \eqref{eq:z_update_distributed}. Consider a general ADMM formulation with Lagrangian \eqref{eq:Lagrangian}, z-update can be written as:
\begin{equation}
    z^{k+1} \leftarrow \argmin_z g(z) + (\rho/2) \norm{\alpha - z + u}_2^2
\end{equation}
which is equivalent to \eqref{eq:z_update}, and can be solved via one step of projected Gradient update of problem, which can be considered as:
\begin{equation}\label{eq:apgm}
    z^{k+1}[k'+1] \leftarrow \argmin_z g(z) + \frac{1}{\epsilon} \norm{z-(z^{k+1}[k'] - \epsilon \grad_{\vz_{:j}}\cL)}_2^2.
\end{equation}
When treat the quadratic part as a smooth term and $g(z)$ as a constraint via the indicator function, \eqref{eq:z_update_distributed} can be seen as the projected gradient step toward minimizing the Lagrangian.

According to \cite{ma2016alternating}, for the optimization problem \eqref{eq:square-problem} (let $f(\alpha) = \sum_i f_i( \pmb\alpha_i)$ for simplicity),
update step of variable $z$ with a step size of $\tau$ via:
\begin{multline}
    z^{k+1}:=\argmin_z g(z) +\\
     \frac{\rho}{2\tau}\norm{z-\left(z^k+\tau(x^{k+1}-z^k+u^k)\right)}_2^2,
\end{multline}
which renders that:
\begin{equation}\label{eq:partialKKT}
    0\in  \partial g(z^{k+1}) + \frac{\rho}{\tau} (z^{k+1}-z^{k}) + \rho \transpose (z^{k+1}-z^k-u^{k+1}).
\end{equation}
Using the fact that $\partial g$ is a monotonic operator, \eqref{eq:partialKKT} can be combined with the KKT condition $0\in \partial g(z^*)+ u^*$ to derive the condition:
\begin{multline}\label{eq:z_update_trand}
    (z^{k+1}-z^*)\big((\frac{1}{\tau}(z^k-z^{k+1})-\\
    (z^k-z^{k+1})-(u^{k+1}-u^*))\big)\geq 0.
\end{multline}
And since regular $\alpha$-update for ADMM is used, which renders:
\begin{equation}
    0 \in \partial f(\alpha^{k+1}) + \rho (u^{k+1}-z^k+z^{k+1}).
\end{equation}
Similarly, combined with KKT condition $0\in \partial f(\alpha^*) -  \alpha^*$ yields:
\begin{equation}\label{eq:x_update_trand}
    (\alpha^{k+1}-\alpha^{k})\transpose ( (u^{k+1}-u^*)+(z^{k+1}-z^k))\geq 0.
\end{equation}
Combine \eqref{eq:z_update_trand} and \eqref{eq:x_update_trand} considering the fact that $\alpha^* =z^* $ and $u^{k+1}= u^k+\alpha^{k+1}-z^{k+1}$, we get:
\begin{multline}
    \frac{1}{\tau}(z^{k+1}-z^*)\transpose(z^k-z^{k+1}) + (u^k-u^{k+1})\transpose (z^k-z^{k+1}) \\+ (u^{k+1}-u^*)\transpose(u^k-u^{k+1})\geq 0
\end{multline}
Considering having $y^k=\bmat{\frac{1}{\sqrt{\tau}} z^k \\ u^k}$, and $y^* = \bmat{\frac{1}{\sqrt{\tau}} z^* \\u^*}$, we can rewrite this inequality as:
\begin{multline}
 (y^{k+1}-y^{*})\transpose (y^k-y^{k+1})\\
 =  (y^{k}-y^{*})\transpose (y^k-y^{k+1}) - \norm{y^k-y^{k+1}}_2^2\\
 \geq -(u^k-u^{k+1})\transpose (z^k-z^{k+1})
\end{multline}
which implies
\begin{multline}\label{eq:update2optimal}
    \norm{y^k-y^*}^2_2-\norm{y^{k+1}-y^*}^2_2 \\
    = 2(y^{k}-y^{*})\transpose (y^k-y^{k+1}) - \norm{y^k-y^{k+1}}^2_2 \\
    \geq  \norm{y^k-y^{k+1}}^2_2-2(u^k-u^{k+1})\transpose (z^k-z^{k+1})
\end{multline}
Set $\xi = \frac{1}{2}+\frac{\tau}{2}$, and $\tau<\xi<1$, using the Cauchy-Schwartz inequality, we can get
\begin{multline}
-2(u^k-u^{k+1})\transpose (z^k-z^{k+1}) \\\geq -\xi \norm{u^k-u^{k+1}}^2_2 - \frac{1}{\xi}\norm{z^k-z^{k+1}}^2_2
\end{multline}
making \eqref{eq:update2optimal}
\begin{multline}\label{eq:convergence}
    \norm{y^k-y^*}^2_2-\norm{y^{k+1}-y^*}^2_2 \\
    \geq  (1-\xi) \norm{u^k-u^{k+1}}^2_2 + (\frac{1}{\tau}-\frac{1}{\xi})\norm{z^k-z^{k+1}}^2_2\\
    \geq \eta \norm{y^k-y^{k+1}}
\end{multline}
where $\eta = \min\{ (1-\xi), (\frac{1}{\tau}-\frac{1}{\xi})\}$.
From \eqref{eq:convergence}, it can be implied that $\norm{y^k-y^{k+1}}^2_2\rightarrow 0$, and $\norm{y^k-y^*}^2_2$ is monotonically non-increasing and thus converges. Thus, if the proximal gradient step $\epsilon < 1$, $z^k$ provided by \eqref{eq:z_update_distributed_agent} will converge to optimal solution $z^*$.
{\small
\bibliographystyle{IEEEtran}
\bibliography{ADMM_planning,ACC,task_assignment}
}
\begin{IEEEbiography}[{\includegraphics[width=1in,height=1.25in,clip,keepaspectratio]{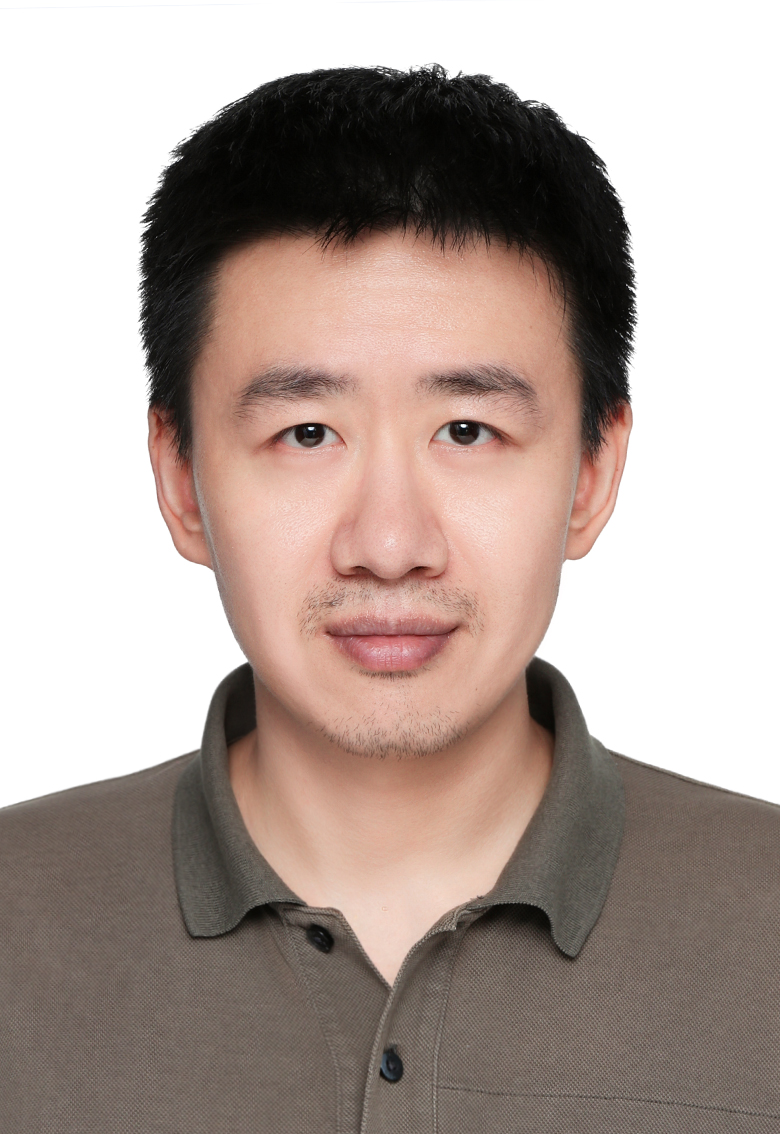}}]{Ziqi Yang} received the B.S. degree in Automation from Harbin Institute of Technology, Harbin, China in 2015, the M.Eng. degree in Electrical and Computer Engineering from Cornell University, Ithaca, NY, USA in 2017, and the Ph.D. degree in Systems Engineering from Boston University, Brookline, MA, USA in 2023. He is currently a Research Fellow at the School of Electrical and Electronic Engineering at Nanyang Technological University.
His current research interests include nonlinear control, planning and optimization, with an emphasis on system safety and security.
    \end{IEEEbiography}
    \begin{IEEEbiography}[{\includegraphics[width=1in,height=1.25in,clip,keepaspectratio]{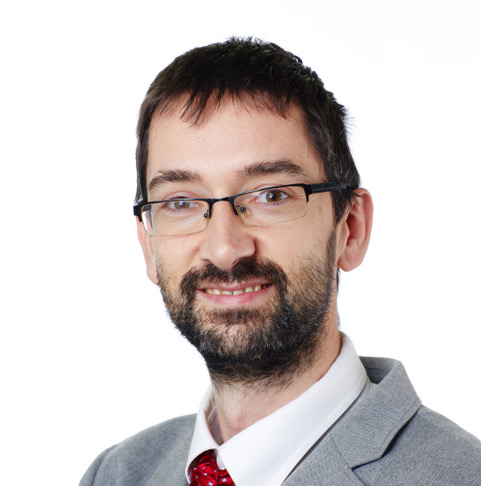}}]{Roberto Tron}
        received the B.Sc. degree in 2004
        and the M.Sc. degree (highest honors) in 2007 both in telecommunication engineering from the Politecnico di Torino, Turin, Italy, the Diplome d’Engenieur from the Eurecom Institute, Biot, France, and the DEA degree from the Université de Nice Sophia-Antipolis, Nice, France in 2006, and the Ph.D. degree in electrical and computer engineering from The Johns Hopkins University, Baltimore, MD, USA, in 2012. He was a Postdoctoral Researcher with the GRASP Lab, University of Pennsylvania, Philadelphia, PA, USA, until 2015. He is an Associate Professor of Mechanical Engineering and System Engineering with Boston University, Boston, MA, USA. His research interests include the intersection of automatic control, robotics, and computer vision,  with a particular interest in applications of Riemannian geometry and linear programming in problems involving distributed teams of agents, or geometrical and spatio-temporal constraints. Dr. Tron was recognized at the IEEE Conference on Decision and Control with the “General Chair’s Interactive Presentation Recognition Award”, in 2009, the “Best Student Paper Runner-up” in 2011, and the “Best Student Paper Award” in 2012.
    \end{IEEEbiography}

\end{document}